\documentclass[10pt,draft,reqno]{amsart}


\usepackage{color}

\usepackage{enumitem}

\usepackage{bm}
\usepackage[T1]{fontenc}
\usepackage{graphics}
\usepackage[colorlinks,linkcolor=blue]{hyperref}
\usepackage{mathtools}
\usepackage{multicol}
\usepackage{systeme}

\newcommand{\BB}[1]{{\mathbb #1}}
\newcommand{\B}[1]{{\mathbf #1}}
\newcommand{\C}[1]{{\mathcal #1}}

\newcommand{\OP}{\operatorname}

     \makeatletter
     \def\section{\@startsection{section}{1}%
     \z@{.7\linespacing\@plus\linespacing}{.5\linespacing}%
     {\bfseries
     \centering
     }}
     \def\@secnumfont{\bfseries}
     \makeatother
\setlength{\textheight}{19.5 cm}
\setlength{\textwidth}{12.5 cm}
\newtheorem{theorem}{Theorem}[section]
\newtheorem{lemma}[theorem]{Lemma}
\newtheorem{proposition}[theorem]{Proposition}
\newtheorem{corollary}[theorem]{Corollary}
\theoremstyle{definition}
\newtheorem{definition}[theorem]{Definition}
\newtheorem{example}[theorem]{Example}
\theoremstyle{remark}
\newtheorem{remark}[theorem]{Remark}
\numberwithin{equation}{section}
\setcounter{page}{1}

\begin{document}

\title[Multi-asset binomial market]
{Pricing multi-asset contingent claims in a multi-dimensional binomial market}
\author{Jarek K\k{e}dra}
\address{Jarek K\k{e}dra: 
University of Aberdeen, Department of Mathematics, Fraser Noble Building, AB24 3UE Aberdeen, Scotland, UK}
\email{kedra@abdn.ac.uk}
\author{Assaf Libman}
\address{Assaf Libman: 
University of Aberdeen, Department of Mathematics, Fraser Noble Building, AB24 3UE Aberdeen, Scotland, UK}
\email{a.libman@abdn.ac.uk}
\author{Victoria Steblovskaya}
\address{Victoria Steblovskaya: Bentley University, 175 Forest Street, Waltham, MA 02452, USA}
\email{vsteblovskay@bentley.edu}

\subjclass[2000]{91G15, 91G20, 91G60, 91G80, 90C05, 90C27}
\keywords{Multi-dimensional binomial model; incomplete market; multi-asset contingent claim; boundaries of no-arbitrage price interval; supermodular functions.}

\begin{abstract}
We consider an incomplete multi-dimensional binomial market and a multi-asset European type contingent claim in it. For a general multi-asset contingent claim, we build straightforward algorithms that return the boundaries of a no-arbitrage contingent claim price interval. These algorithms are replaced with explicit formulas for a wide class of contingent claims (both path-independent and path-dependent). This simplification is possible due to the following remarkable fact: for this class of contingent claims, an extremal multi-step martingale
measure is a power of the corresponding single-step extremal martingale measure for which an explicit formula is provided. Our results apply, for example, to European basket call and put options and Asian arithmetic average options.
\end{abstract}

\maketitle

\section{Introduction}\label{S:intro}

Various extensions of the classical single-dimensional binomial model of Cox, Ross, and Rubinstein \cite{MR3821653} have been studied in connection with financial applications for the last several decades. One line of research originating from \cite{Boyle} (see e.g. \cite{BoyleEtAl,MR2378982,KaRi,MR2506703, MR3175871} and references therein) is devoted to the problem of pricing multi-asset contingent claims (options) in complex continuous time markets by means of binomial approximations. The continuous time multi-dimensional stochastic processes that model stock price dynamics are approximated with suitable multi-dimensional binomial discrete time processes. So the multi-dimensional binomial model in this approach serves only as a convenient computational tool.

Another line of research (see \cite{MR1611839,MR2283328} and references therein) explores the multi-dimensional binomial model as an incomplete market model where pricing and hedging of multi-asset financial derivatives is studied. In an incomplete market, there is no unique no-arbitrage price for an option. Instead, there is an interval of no-arbitrage option prices, and each price in that interval is obtained as a discounted expected option pay-off with respect to a certain martingale measure. The boundaries of the no-arbitrage option price interval are obtained by means of the extremal martingale measures. Identifying these measures and the boundaries of the no-arbitrage option price interval is an important theoretical problem which has direct applications in practice.

Motoczy\'nski and Stettner \cite{MR1611839} studies the binomial market with $d$ risky assets whose price processes are independent of each other. Using a geometric approach, the authors obtain a general formula for the upper bound of an interval of no-arbitrage prices for a
multi-asset option in terms of an iteration of a certain transformation within the space of Borel real valued measurable
functions on $\BB R^d$. This general formula is further developed analytically for the case of a two-asset single-step binomial market. Explicit formulas for the maximal martingale measure, as well as the upper bound of the no-arbitrage option price interval are obtained. 
The authors show that their results can be extended to the case of bounded stock price ratios and a convex pay-off function.

In \cite{MR2283328}, the two-asset binomial market is studied further and extended to the multi-step case. No assumptions are made regarding the joint distribution of the stock price ratios. For a large class of the two-asset options,  explicit formulas are obtained for both, the upper and the lower bounds of the no-arbitrage price interval. The results are extended to the case where the stock price ratios are distributed over a closed rectangle and the pay-off function is convex.

Further multi-dimensional extensions of the CRR model can be found in e.g  \cite{MR1725751} and \cite{MR1987320}. In \cite{MR1725751}, a version of the multi-dimensional CRR model with no riskless asset is considered. A special attention is paid to the no-arbitrage assumptions for this market. 

In \cite{MR1987320}, the authors consider pricing multi-asset options in discrete time Markovian models. Using the dynamic programming approach, they obtain the upper bounds of the no-arbitrage option price intervals (in their terminology, superprices) for European and American type options. The two-dimensional single-step CRR model occurs as an example. 

The present paper further develops the studies of \cite{MR1611839,MR2283328}.   
We consider a discrete time $n$-step market model with $m$ risky assets (stocks). Each stock price 
follows a binomial model. No assumptions are made on the joint distribution of the stock price processes. In this market, we consider both path-dependent and path-independent multi-asset European contingent claims (options). We compute bounds of the no-arbitrage option price interval at every time step via an iterated linear programming algorithm. 

The above algorithm is significantly improved and in many cases replaced with closed form formulas for a special class of contingent claims: fibrewise supermodular contingent claims (see Definition \ref{D:f-supermodular}). This class is sufficiently large. It includes many commonly used multi-asset options, such as e.g. European basket call and put options as well as  
Asian (path-dependent) basket options based on arithmetic average. 

The computations require a careful study of the set of appropriate martingale measures. 
Our main theoretical result (Theorem \ref{T:assaf}) states that for a fibrewise supermodular contingent claim, the maximal and minimal martingale measures that deliver the boundaries of the no-arbitrage option price interval are product measures of the form $\B P^{\otimes n}=\B P\otimes \B P\otimes \cdots \otimes \B P$, where $\B P$ is an appropriate 
single-step extremal martingale measure. Moreover, the single-step extremal martingale measures can be described explicitly (Section \ref{S:explicit}). This remarkable fact allows writing down explicit formulas for the boundaries of the no-arbitrage option price intervals.

\begin{remark}
	The proof of Theorem \ref{T:assaf} is an elaboration of the well known result of the combinatorial
	optimization which states that supermodular set functions when restricted to convex 
	polytopes are maximized on a special vertex \cite{MR717403,MR3154633,MR468177,MR548704}. 
	We present detailed arguments in the Appendix, which is somewhat more
	technical than the rest of the paper.  
\end{remark}

\begin{remark}
The multi-dimensional CRR market is incomplete even in the simplest case of two assets. It is interesting to observe that the multi-dimensional Black-Scholes model  which represents the limiting case of this model when the number of time steps tends to infinity is complete \cite{MR1421066}. This provides a motivation to study this discrete time model since it reveals interesting effects that are not observed in the continuous time setting.  
\end{remark}

\begin{remark}
As an additional motivation for studying the multi-dimensional binomial market,
we refer to the line of research 
\cite{JKS-risk,MR3484546,MR3612260}. These studies are
devoted to the extension of a classical single-dimensional binomial CRR model
to the case where the underlying stock price ratios are distributed over a
bounded interval. The extended binomial model represents a realistic and quite
complex incomplete market model and is of interest to practitioners. At the
same time, practical applications of this extended binomial model are based on
explicit formulas derived in the framework of the basic binomial CRR model.

The present paper builds the foundation for extending the multi-dimensional
binomial model to a more realistic market model, where the underlying stock
price ratios are distributed over a convex set.
\end{remark}

The paper is organized as follows. Section \ref{S:preliminaries} is devoted to introducing
all necessary definitions. We put an emphasis on describing a sample space, an information structure, the sets of martingale and risk-neutral measures for our market model in detail. Here we also provide a survey of known supporting results. We need such a detailed introduction of these basic structures and concepts because they are used both in our numerical algorithms and in proofs of our statements.
In Section \ref{S:single-step} we discuss a single-step model. Here we describe the sets of martingale and risk-neutral measures. We present a problem of linear programming (LP) for computing bounds of the no-arbitrage contingent claim price interval. This section creates building blocks for
the multi-step model discussed in Section~\ref{S:multi-step}. 
Section \ref{S:improvement} is devoted to improvements of the numerical algorithms presented in earlier sections. Here the main Theorem \ref{T:assaf} is formulated and proved. In Section \ref{S:explicit},
we provide explicit pricing formulas for fibrewise supermodular contingent claims. Finally,
in Section \ref{S:examples}, we discuss concrete examples of fibrewise supermodular contingent claims and provide explicit pricing formulas. For European basket call and put options we present explicit formulas for the bounds of no-arbitrage option price intervals in Examples \ref{E:euro-call} and \ref{E:euro-put}. Arithmetic average Asian basket call and put options are discussed in Example \ref{E:asian-call}. 
All technical results needed for the proofs are presented in the Appendix.

\section{Preliminaries}\label{S:preliminaries}

\subsection{The market model and main definitions}\label{SS:model}

We consider a financial market which consists of $m$ risky assets (stocks) and a riskless asset (bond) that can be traded at discrete time moments $t=0,1,\dots,n$. In this market we consider a European contingent claim (option). 

Let us define the main ingredients of the model in more detail.

\subsubsection*{Time} 
The time is modeled by a finite set $\mathbb T=\{0,1,2,\ldots,n\}$.

\subsubsection*{The bond process} It is a deterministic process defined by
$B(t) = R^t$, where $R=1+r$ and $r$ is a constant periodic interest rate.

\subsubsection*{The stock price process} For each $i=1,\ldots,m$, the stock price process $S_i = \left(S_i(t)\right)_{t=0,1,\dots ,n}$ is described by the $n$-step binomial dynamics. The stock price ratios $\psi _i(t)=S_i(t)/S_{i}(t-1)$ at each time moment $t=1,\dots ,n$ can take two possible values: $\psi _i(t)\in \{D_i,U_i\}$. We will assume that for each binomial model the no-arbitrage condition $0<D_i<R<U_i$ holds.

In order to simplify terminology we say that the price of the $i$-th stock
went {\em up} at time $t$ if $\psi _i(t) = U_i$ and it went {\em down}
if $\psi _i(t) = D_i$.
We will denote by $S(t)$ a stock price vector at time $t$:
$$
S(t)=\left(S_1(t),S_2(t),\dots ,S_m(t) \right),
$$
where $t=0,1,\dots,n$.

\subsubsection*{The contingent claim} In the above market, we consider a European contingent claim $X$ with the real-valued pay-off function $f$, which may depend only on the terminal value $S(n)$ of the stock price vector or on the entire stock price path $S(t),$ $t=1,\dots,n$.   

If $m\geq 2$ then the above market model is incomplete \cite[Section 1.5]{pliska}. 

\subsubsection*{Sample space for the $n$-step model and basic random variables} The sample space $\Omega_n$ for the $n$-step model is modeled by the set of $(m\times
n)$-matrices $\omega$ with entries $\omega_{ij}\in~\{0,1\}$. Each element $\omega$ represents the state of the world at time $n$. The value $\omega_{ij}=0$ signifies that the price of the $i$-th stock went down at time $j$. The value $\omega_{ij}=1$ signifies that the price of the $i$-th stock went up at time $j$.

With this notation, the stock price ratio $\psi _i(t)={S_i(t)}/{S_{i}(t-1)}$ for each $t=1, \dots , n$ can be defined as a random variable on $\Omega_n$ described as follows:
$$
\psi_i(t)(\omega) = 
\begin{cases}
D_i & \text{ if } \omega_{it}=0,\\
U_i & \text{ if } \omega_{it}=1,\\
\end{cases}
$$
where $i=1, \dots , m$. Consequently, the $i$-th stock price at time $t\in \BB T$  can be presented as 
$$
S_i(t) = S_i(0) \psi_i(1)\cdots \psi_i(t),
$$
where $i=1, \dots , m$ and $S_i(0)$ is the known initial stock price.

A direct computation shows that
$$
S_i(t)(\omega) = S_i(0) U_i^{\sum_{j=1}^t \omega_{ij}} D_i^{t-\sum_{j=1}^t \omega_{ij}}.
$$

So for each  $\omega\in \Omega_n$ the $i$-th row of $\omega$
describes the $n$-step dynamics of the $i$-th stock price, while the $j$-th column
describes the single-step dynamics of the stock price vector from time $j-1$ to time $j$.

\subsubsection*{Ordering elements in a sample space}
Whenever convenient we shall consider an elementary event $\omega\in \Omega_n$ either as an $(m\times n)$-matrix with entries $\omega_{jk}$, $j=1,\dots , m$, $k=1,\dots ,n$ or as an $n$-tuple of its column vectors
$$
\omega = (\omega^{1}\ \omega^{2}\ \dots\ \omega^{n}),
$$
where $\omega^{k}$ is a column with entries $\omega_{jk}$, $j=1,\dots , m$. It is straightforward to see that $\Omega_n$ consists of $N=2^{mn}$ elements.

It will be
convenient to order the elements of $\Omega_n$ with respect to the reverse
lexicographic order. 
In the case of $n=1$, the set $\Omega_n =\Omega_1$ contains $N=2^{m}$ elements. Each element $\omega\in \Omega_1$ is a $(m\times
1)$-matrix (or, equivalently, a column vector of length $m$), and the elements of $\Omega_1$ are ordered as follows:
\begin{equation}\label{eq:basic_columns}
\omega_1=
\begin{pmatrix}
1 \\
1 \\
\vdots \\
1\\
1
\end{pmatrix},\
\omega_2=
\begin{pmatrix}
1 \\
1 \\
\vdots \\
1\\
0
\end{pmatrix},\ \dots,\
\omega_{N-1}=
\begin{pmatrix}
0 \\
0 \\
\vdots \\
0\\
1
\end{pmatrix},\
\omega_N=
\begin{pmatrix}
0 \\
0 \\
\vdots \\
0\\
0
\end{pmatrix}.
\end{equation}
Each $\omega_i$, when read from top to bottom, is a binary representation of the number $2^m-i.$ Elements of $\Omega_n$ for any $n$ can be ordered in a similar manner.

\subsubsection*{The information structure}

The state of the world at time $k\in \mathbb T$ is described by the subset of
matrices from $\Omega_n$ with the first $k$ columns fixed. Each subset, denoted $\C P(\omega^1,\ldots,\omega^k)$, has the following form: 
\begin{equation}\label{eq:step-k_matrix_coord}
	\C P(\omega^1,\ldots,\omega^k)=
	\left\{	\omega \in \Omega_n\ |\ \omega =\left(\omega^1\ \dots \omega^k\ \ast \dots \ast \right)	\right\}
\end{equation}
In what follows, we will say that the $(m\times k)$-matrix $(\omega^1 \dots \omega^k)$ which represents the common part of all matrices $\omega \in \Omega_n$ included in set $\C P(\omega^1,\ldots,\omega^k)$ is {\em associated with the set $\C P(\omega^1,\ldots,\omega^k)$ }. 

There are $2^{mk}$ disjoint subsets of the form (\ref{eq:step-k_matrix_coord}) with different associated matrices. These subsets form a partition $\C P_k$ of
$\Omega_n$. The initial partition is 
trivial $\C P_0 = \{\Omega_n\}$, the last one
$\C P_n = \{\{\omega_1\},\{\omega_2\},\ldots,\{\omega_N\}\}$, where $N=2^{mn}$.
Clearly, the partition $\C P_k$ is finer than the partition $\C P_{k-1}$
and hence they form a sequence of finer and finer partitions.

Each partition $\C P_k$ can be put into one-to-one correspondence with a subalgebra 
$\C F_k$ of the algebra $2^{\Omega_n}$ of all subsets of $\Omega_n$.
The subalgebras $\C F_k$ form a filtration $\C F$, an increasing sequence of subalgebras $\{\C F_k\},$ $k=0,\dots ,n-1$, where $\C F_k\subseteq \C F_{k+1}.$ Here $\C F_0 =\{\emptyset,\Omega_n\},$ $\C F_n= 2^{\Omega_n}$ consists of all subsets of $\Omega _n.$

In what follows, we assume that the filtration $\C F$ is generated by the stock price vector process $(S(t))_{t=0,1,\dots ,n}.$

\subsubsection*{The supporting tree}\label{SS:supporting-tree} 
The above information structure can be described also with the help of a finite directed rooted $n$-step binary tree which we will denote by $\B T$. We will call $\B T$ the {\em supporting tree} for the $n-$step market model under consideration.
The supporting tree $\B T$ consists of vertices and directed edges that connect the vertices.

In what follows, we will use the following terminology and notation. We will denote by $v_0 \in \B T$ the root of the tree $\B T$. In other words, $v_0$ is the vertex that corresponds to time $t=0.$ For each vertex $v\in \B T$ there is a unique path from the root $v_0$ to $v$. If such a path consists of $k$ edges we
say that the vertex $v$ corresponds to the time step $k$. 
The set of vertices that correspond to time $k$ is denoted by $\B T_k$. In particular, $\B T_n$ are the terminal
vertices. We will call terminal vertices {\em leaves} and non-terminal vertices {\em nodes}. We say that a vertex $v\in \B T_{\ell}$, $\ell \le n$, is a {\em successor} of a node $w\in \B T_k$, $k<\ell$, if there is a path from $w$ to $v$. 
The leaves of the tree $\B T$ are in one-to-one correspondence with the
elements of the sample space $\Omega_n$, or, in other words, with the sets of the partition $\C P_n = \{\{\omega_1\},\{\omega_2\},\ldots,\{\omega_N\}\}$. The set of leaves corresponds to the
time step $n$ and represents all possible states of the world at time $n$.
Additionally, each leaf describes the stock price vector dynamics from time $0$
to time $n$. 

The nodes of $\B T$ that correspond to time  $k<n$ are in one-to-one correspondence with the sets 
$\C P(\omega^1,\ldots,\omega^k)$ of the partition $\C P_k$ that are defined in (\ref{eq:step-k_matrix_coord}). 
In the rest of the paper we will frequently use this correspondence.
So whenever we are talking about a node $v\in \B T_k$ we implicitly
mean that it {\em is} the corresponding set $\C P(\omega^1,\ldots,\omega^k)$ of the partition $\C P_k$. 
The $(m\times k)$-matrix $(\omega^1 \dots \omega^k)$ associated with the set $\C P(\omega^1,\ldots,\omega^k)$ will be called also the matrix {\em  associated with the node $v\in \B T_k$}.
Similarly to the case of leaves, each node $v\in \B T_k$, $k<n$, describes the stock price vector dynamics from time $0$ to time
$k$. 

There is an edge from a node $v$ to a vertex $w$ (where $w$ could be a node or a leaf) if and only if a matrix associated with $w$ has been obtained from the matrix associated $v$
by appending a column. Thus the root is associated with an empty matrix and the tree is regular
in the sense that for each vertex except the leaves and the root there is
exactly one incoming edge and $2^m$ outgoing edges.

\subsubsection*{Probability measures in an $n$-step model}
A probability measure $\B P$ on $(\Omega_n, 2^{\Omega_n}, \C F)$ (or, equivalently, a probability measure in an $n$-step market model) is defined by its probability function $p: \Omega_n \to [0,1]$: 
\[
p(\omega_i)=\B P (\{\omega_i\})=p_i,
\]
where $\omega _i\in \Omega _n$ is an elementary event, $i=1,\dots ,N=2^{mn}$, and $\sum_{i=1}^N p_i = 1$.
For simplicity, in what follows we will use the notation $\B P(\omega_i)$ instead of $\B P (\{\omega_i\})$ and identify a probability measure $\B P$ with a vector $(p_1,p_2,\ldots,p_N)\in \BB R^N$ with non-negative coordinates which
sum up to $1$.  Thus the set of all probability measures in the $n$-step market model with $m$ assets is equivalent to the unit simplex:
\begin{equation}\label{Eq:Delta}
\Delta(\Omega_n) = \left\{(p_1,\ldots,p_N)\in \BB R^N \ |\  \sum_{i=1}^N p_i = 1,\ p_i\geq 0,\ N=2^{mn}\right\}.
\end{equation}
In what follows we will use the notation $\Delta(\Omega_n)$ for the set of probability measures on $(\Omega_n, 2^{\Omega_n}, \C F)$.

\subsubsection*{Martingale and risk-neutral measures.} A probability measure $\B P\in \Delta(\Omega_n)$ is called a {\em martingale} measure in an $n$-step market model if it satisfies the following conditions:
\begin{equation}\label{Eq:martingale_N}
\B E_{\B P}(S_i(k+\ell)|\mathcal{F}_k) = R^{\ell} S_i(k),
\end{equation}
or equivalently, 
\begin{equation}\label{Eq:martingale_N_jump}
\B E_{\B P}(\psi_i(k+1)\dots \psi_i(k+\ell)) = R^{\ell}.
\end{equation}
where $i=1,2,\ldots, m$ and $0\leq k+\ell\leq n$ with $k,\ell\geq 0$. 
In other words, $\B P$ is a martingale measure if and only
if the discounted price process for each stock is a martingale 
with respect to $\B P$. The set of martingale measures in an $n$-step market model will be denoted by ${\rm M}_n$. 

A martingale measure $\B P = (p_1,p_2,\ldots,p_N)$ is called {\em risk-neutral}
if $p_i>~0$ for each $i=1,\ldots,N$. The set of risk-neutral measures in an $n$-step market model will be denoted by ${\rm N}_n$.
So we have: ${\rm N}_n\subset{\rm M}_n\subset \Delta(\Omega_n)$, where ${\rm M}_n$ is the closure of ${\rm N}_n$ in $\BB R^{N}$.

\subsection{Supporting known results}\label{SS:supporting}
In this section we present a number of known results which we need later.

\subsubsection*{Multi-step vs single-step measures}
Let $\B P\in \Delta(\Omega_n)$ be a probability measure. Let us assume in addition $\B P$ is non-degenerate in the following sense: $\B P(\omega_i)=p_i>~0$ for each $i=1,\ldots,N$. For each node $v\in \B T_k$, $k<n$, the
$n$-step non-degenerate probability measure $\B P$ defines a non-degenerate probability measure
$\B P_v\in \Delta(\Omega_1)$ in the corresponding underlying single-step model,
as the following conditional probability given the node $v$:
\begin{equation}\label{eq:single-step-measure}
\B P_v\left(\omega^{k+1}\right) = 
\B P\left(\C P(\omega^1\cdots \omega^k \omega^{k+1}) \ |\ \C P(\omega^1\cdots\omega^k)\right),
\end{equation}
for $k=0,\dots ,n-1$.
Here $\C P(\omega^1\ \ldots\ \omega^k)$ is the set of the partition $\C P_k$
corresponding to the node $v\in \B T_k$, and $\omega^{k+1}$ is an element of
the sample space $\Omega _1$ in the corresponding underlying single-step model
with the root at node $v$. 
Since $\B P$ is non-degenerate, the event $\C P(\omega^1\ \ldots\ \omega^k)$ has
positive measure and the above conditional probability is well defined.

Conversely, assigning a single-step, possibly degenerate, probability 
$\B P_v\in \Delta(\Omega_1)$ at each node $v\in \B T_k$, $k<n$, defines an
$n$-step probability measure $\B P\in \Delta(\Omega_n)$ as follows: for each
$\omega = (\omega^1\ \ldots\ \omega^n)\in\Omega_n$,
\begin{equation}\label{eq:multi-step-measure}
\B P(\omega) 
= \B P_{v_0}(\omega^1)\B P_{v_1}(\omega^2)\cdots \B P_{v_{n-1}}(\omega^n),
\end{equation}
where the node $v_k$ corresponds to the set $\C P(\omega^1 \cdots \omega^k)$.

\subsubsection*{Multi-step and single-step martingale and risk-neutral measures}
The above construction preserves the martingale and risk-neutral properties of measures as stated in the following proposition
the proof of which can be found in \cite[Section 3.4]{pliska}.

\begin{proposition}\label{P:martingale-tree}
Let $\B P\in \Delta(\Omega_n)$ and let $\B T$ be the supporting 
tree. Then $\B P$ is a martingale measure (resp. risk-neutral measure) in an $n$-step market model if and only if
each $\B P_{v_k}\in \Delta(\Omega_1)$ in (\ref{eq:multi-step-measure}) is a martingale measure (resp. risk-neutral measure) in the corresponding underlying single-step model. In other words: 
\begin{align*}
\B P \in {\rm M}_n \; &\Leftrightarrow \; \forall k \; \B P_{v_k}\in {\rm M}_1\\
\B P \in {\rm N}_n \; &\Leftrightarrow \; \forall k \;  \B P_{v_k}\in {\rm N}_1.
\end{align*}
\end{proposition}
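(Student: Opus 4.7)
The plan is to prove both equivalences (martingale and risk-neutral) by reducing the defining conditions to statements about conditional probabilities on single atoms of $\C F_k$, and then invoking the product formula \eqref{eq:multi-step-measure}.

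For the risk-neutral equivalence, I would observe that $\B P\in {\rm N}_n$ means $\B P(\omega)>0$ for every $\omega\in\Omega_n$. Since the product formula writes $\B P(\omega)=\prod_{k=0}^{n-1}\B P_{v_k}(\omega^{k+1})$ as a product of nonnegative numbers, positivity of the left-hand side for all $\omega$ is equivalent to positivity of every factor across all choices, i.e.\ $\B P_{v_k}\in {\rm N}_1$ for every node $v_k$ on every path. This is essentially a bookkeeping argument.

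For the martingale equivalence I would use the characterization \eqref{Eq:martingale_N_jump} in the form $\B E_{\B P}(\psi_i(t)\mid \C F_{t-1})=R$. Fix a node $v\in\B T_{t-1}$ with associated partition block $A=\C P(\omega^1\cdots\omega^{t-1})$. A direct computation from \eqref{eq:multi-step-measure} shows that for any $\omega^t\in\Omega_1$,
\[
\B P\big(\{\psi_{\cdot}(t)=\omega^t\}\cap A\big)\;=\;\B P(A)\cdot\B P_v(\omega^t),
\]
because the factors beyond step $t$ sum to $1$ when one sums over the free entries. Hence the conditional expectation on the atom $A$ equals $\sum_{\omega^t\in\Omega_1}\psi_i(t)(\omega^t)\,\B P_v(\omega^t)$. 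Setting this equal to $R$ on every atom $A$ of $\C F_{t-1}$ and every $t$ is, term by term, exactly the statement that each $\B P_{v}$ satisfies the single-step martingale condition \eqref{Eq:martingale_N_jump}. Thus both directions follow: $(\Rightarrow)$ by specialising the $n$-step condition to each atom, and $(\Leftarrow)$ by assembling the atom-wise single-step conditions into the $n$-step condition via the tower property of conditional expectation.

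The main obstacle will be the clean derivation of the identity $\B P(\{\psi_\cdot(t)=\omega^t\}\cap A)=\B P(A)\,\B P_v(\omega^t)$ from \eqref{eq:multi-step-measure}, which requires summing the product formula over the unfixed columns after position $t$ and using that single-step measures are probability measures. Once this identity is in hand, everything else is a direct translation between the $n$-step and single-step formulations; the risk-neutral case is a degeneracy/positivity remark that does not use the martingale structure at all.
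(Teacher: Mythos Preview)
Your argument is correct and is the standard proof of this fact. Note, however, that the paper does not supply its own proof of Proposition~\ref{P:martingale-tree}: it simply refers the reader to \cite[Section~3.4]{pliska}. So there is no ``paper's proof'' to compare against; your write-up would serve as a self-contained replacement for that citation.

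One small point worth making explicit in the $(\Rightarrow)$ direction of the martingale equivalence: the conditional expectation $\B E_{\B P}(\psi_i(t)\mid \C F_{t-1})$ is determined only on atoms $A=\C P(\omega^1\cdots\omega^{t-1})$ with $\B P(A)>0$. Consequently, from $\B P\in {\rm M}_n$ you only deduce $\B P_v\in {\rm M}_1$ at nodes $v$ that are reached with positive probability under the product decomposition \eqref{eq:multi-step-measure}; at unreachable nodes the single-step measure $\B P_v$ is irrelevant to $\B P$ and need not be martingale. This is the usual caveat (and is implicitly what the paper means, since the decomposition \eqref{eq:single-step-measure} is only defined when the conditioning event has positive measure), but it is worth flagging so that the ``$\forall k$'' in the displayed equivalence is read correctly.
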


\subsubsection*{No-arbitrage pricing of contingent claims}\label{sec: no-arb-price}
Let $X$ be a European type contingent claim in an $n$-step market model.
Since our market model is incomplete for $m\geq 2$, the no-arbitrage price of $X$ at time $k=0,1,\dots ,n-1$ is not unique. Each no-arbitrage price of $X$ at time $k$ is obtained as a discounted conditional expectation with respect to an $n$-step risk-neutral measure $\B P\in {\rm N}_n$ as follows:
\begin{equation}\label{Eq:na-price}
{\rm C}_{\B P}(X,k) =R^{-(n-k)} \B E_{\B P}(X\ |\ \C F_k).
\end{equation}
Notice that for $k>0$, ${\rm C}_{\B P}(X,k)$ is a random variable measurable with respect to 
the algebra $\C F_k$ and hence it is determined by its values on the sets of the
partition $\C P_k$, that is, on the nodes $v\in \B T_k$ of the supporting tree
at time $k$.  By varying the risk-neutral measures we obtain that for each
$v\in \B T_k$ the set of no-arbitrage prices of $X$ is an open interval:
\begin{equation}\label{Eq:price-interval}
({\rm C}_{\min}(v),{\rm C}_{\max}(v))
= \left\{{\rm C}_{\B P}(X,k)(v)\in \BB R \ | \ \B P \in{\rm N}_n \right\}.
\end{equation}

\section{The single-step case}\label{S:single-step}

In this section we consider a single-step model, that is, we assume $n=1$ throughout.

\subsection{Specification of the model}
Recall from Section \ref{S:preliminaries} that the sample space $\Omega_1$ for
the single-step model consists of $(m\times 1)$-matrices with binary
coefficients. Thus $\Omega_1$ has $N=2^m$ elements and they are ordered with
respect to the reverse lexicographical order. 

There are only two instances of time $\BB T=\{0,1\}$ and the
initial prices $S_i(0)$ of the risky assets are known; here $i=1,2,\ldots,m$.
The terminal prices are random variables given by $S_i(1)=S_i(0)\psi_i(1)$, where
$\psi_i(1)\in \{D_i,U_i\}$ are price ratios.

\subsection{Martingale measures}
The system of equations \eqref{Eq:martingale_N_jump} which defines the set ${\rm M}_n$ of martingale measures on $\Omega _n$
takes the following form for $n=1$: 
$$
\B E_{\B P}(\psi_i(1)) = R,
$$
where $i=1,\ldots, m$. It follows that a measure $\B P=(p_1,p_2,\ldots,p_N)$ on $\Omega _1$ is a martingale measure ($\B P\in {\rm M}_1$) if it satisfies the following system of equations and inequalities:
\begin{eqnarray}\label{eq:mart_meas_n=1}
\psi_1(\omega_1)p_1+ \psi_1(\omega_2)p_2 + \cdots + \psi_1(\omega_N)p_N&=&R \\
\cdots\qquad  &\cdots &\nonumber \\
\psi_m(\omega_1)p_1+ \psi_m(\omega_2)p_2 + \cdots + \psi_m(\omega_N)p_N&=&R\nonumber \\
p_1+p_2+\cdots+p_N&=&1\nonumber\\
p_j &\geq 0&.\nonumber
\end{eqnarray}
Here we used the simplified notation $\psi_i(\omega_j)$ to denote $\psi_i(1)(\omega_j)$.

Let ${\B \Psi}$ be an $(m\times N)$-matrix that corresponds to the 
first $m$ equations
in the above system. That is,
\begin{equation}\label{eq:Psi}
\B \Psi_{ij} = \psi_i(\omega_j)
\end{equation}
for $i=1,\ldots,m$ and $j=1,\ldots, N=2^m$. 

It follows from the above discussion that ${\rm M}_1$ (the set of martingale measures on $\Omega _1$) is a subset of $\Delta(\Omega_1)$ (the set of all probability measures on $\Omega_1$) such that each $\B P\in {\rm M}_1$ satisfies the system of linear equations 
\begin{equation}\label{eq:mart_cond_n=1}
\B \Psi\ \B P = \B R,
\end{equation}
where $\B \Psi$ is given by \eqref{eq:Psi} and $\B R$ is a column vector of size $m$ with all entries equal to $R$. Let $\B A\subset \BB R^N$ be an affine subspace of solutions of \eqref{eq:mart_cond_n=1}:
$$
\B A = \left\{\B P\in \BB R^N\ | \ \B \Psi\ \B P = \B R\right\}.
$$
We can now summarize the description of the set ${\rm M}_1$ of martingale measures on $\Omega_1$. Geometrically ${\rm M}_1$ forms a subset of $\BB R^N$ which is an intersection of the simplex $\Delta(\Omega_1) \subset \BB R^N$ of all probability
measures on $\Omega_1$ with an affine subspace $\B A$: 
$$
{\rm M}_1 = \Delta(\Omega_1) \cap \B A.
$$
Thus ${\rm M}_1$ is a bounded
convex polytope or, in other words, the convex hull of finitely many points
called vertices. Recall that the risk-neutral measures are those martingale
measures $\B P=(p_1,\ldots,p_N)$ for which $p_j>0$ for all $j=1,\ldots, N$. 
Thus the polytope ${\rm M}_1$ of martingale measures on $\Omega_1$ is the closure of ${\rm N}_1$, the set
of risk-neutral measures on $\Omega_1$.
\begin{remark}
An example at the beginning of Section \ref{SS:nonexamples}  provides an explicit solution to \eqref{eq:mart_meas_n=1} for the case of two assets.
\end{remark}

\subsection{Interval of no-arbitrage continent claim prices}
Consider a European contingent claim $X = f(S(1))$ in the single-step
model. Here 
$$
S(1) = (S_1(1),S_2(1),\ldots,S_m(1))
$$ 
is the stock price vector at maturity and
$f\colon \BB R^m\to \BB R$ is the pay-off function of $X$.

In this single-step model, there is only one node on the supporting tree $\B T$. It is a root of $\B T$ that corresponds to time $t=0$. Therefore, there is only one open interval of the no-arbitrage prices of $X$ (see \eqref{Eq:price-interval}). We will use a simplified notation $({\rm C}_{\min}(0),{\rm C}_{\max}(0))$ for that open interval.

Recall (see Section \ref{sec: no-arb-price}) that each no-arbitrage price of $X$ at time zero ${\rm C}_{\B P}(X,0)$ (or, equivalently, each point in the above open interval) is obtained by computing the expectation of $X$ with respect to a risk-neutral measure on $\Omega _1$ discounted to time zero:
\begin{equation}\label{Eq:na-price-zero}
{\rm C}_{\B P}(X,0) =R^{-1} \B E_{\B P}(X),
\end{equation}
where $\B P\in {\rm N}_1$. Since the set ${\rm M}_1$ of martingale measures on $\Omega _1$ is the closure of the
set ${\rm N}_1$ of risk-neutral measures on $\Omega _1$ we obtain that  
$$
[{\rm C}_{\min}(0),{\rm C}_{\max}(0)] = 
\left\{{\rm C}_{\B P}(X,0)\in \BB R\ |\ \B P \in {\rm M}_1\right\}.
$$
It follows that the upper and lower bounds of the above interval are:
\begin{eqnarray}
{\rm C}_{\max}(0)&=&\max\{{\rm C}_{\B P}(X,0)\ |\ \B P\in {\rm M}_1 \}\label{eq:ub}\\
{\rm C}_{\min}(0)&=&\min\{{\rm C}_{\B P}(X,0)\ |\ \B P\in {\rm M}_1 \}. \label{eq:lb}
\end{eqnarray}
Finding the quantities ${\rm C}_{\max}(0)$ and ${\rm C}_{\min}(0)$ is an important problem of contingent claim pricing.

\subsection{Computing bounds of the no-arbitrage contingent claim price interval}\label{SS:simplex}
With the given contingent claim $X$ we will associate a vector 
$$
\B X=(X_1,X_2, \ldots ,X_N)\in\BB R^N,
$$ 
where
\begin{equation}\label{eq:vectorX}
X_j=f(S_1(1)(\omega_j),\ldots,S_m(1)(\omega_j)),
\end{equation}
where $j=1,\ldots,N$. Then the expected value of $X$ with respect to a probability measure $\B P \in \Delta(\Omega_1)$ can be presented as follows:
\begin{equation}\label{eq:exp_value}
\B E_{\B P}(X)=\sum_{i=1}^{N}X_i p_i=\langle \B X,\B P\rangle,
\end{equation}
where $\langle \cdot, \cdot \rangle$ denotes the Euclidean scalar product in $\BB {R}^N$. 

For a fixed $X$, a map ${\rm C}_{\cdot}(X,0)$ is a linear functional on $\BB R^N$:
\begin{equation}\label{Eq:lin-funct}
{\rm C}_{\B P}(X,0)= R^{-1}\langle \B X,\B P\rangle.
\end{equation}
So the closed interval $[{\rm C}_{\min}(0),{\rm C}_{\max}(0)]$ is the
image of the set ${\rm M}_1$ of martingale measures on $\Omega_1$ with respect
to this linear functional.

As a result, the problem of finding bounds \eqref{eq:ub}-\eqref{eq:lb} for the
no-arbitrage price interval of a contingent claim $X$ can be formulated in
terms of the problem of finding extrema for a linear functional ${\rm
C}_{\cdot}(X,0)$ on a convex polytope ${\rm M}_1$ in $\BB {R}^N$.

As is well known, these extrema are attained at the vertices of ${\rm M}_1$.
Moreover finding both the extremal values of ${\rm C}_{\cdot}(X,0)$ and the
vertices of ${\rm M}_1$ at which they are attained is the standard problem of
Linear Programming (LP).  See, for example, \cite{MR1851303} for various
examples of suitable algorithms.

The following describes the algorithm to compute the upper bound ${\rm
C}_{\max}(0)$ of the interval of no-arbitrage prices for a given contingent
claim $X$.

{\bf Given data:}
\begin{itemize}
\item
Initial stock prices: $S_1(0),\dots,S_m(0)$.
\item
Risk-free growth factor: $R>0$.
\item
Parameters of the binomial models: $\{D_i,U_i\}$, where $i=1, \dots , m.$
\item 
Pay-off function $f.$
\end{itemize}
{\bf Step one: compute input data}
\begin{itemize}
\item 
Compute entries $\B \Psi_{ij}$ of the matrix $\B \Psi$ for $i=1,\dots,m$ and $j=1,\dots, N=2^m$ (see \eqref{eq:Psi} ).
\item
Compute terminal stock prices: $S_i(1)(\omega_j)=S_i(0)\B \Psi_{ij} $, where $i=1,\dots, m$ and $j=1,\dots, N$.
\item
Compute the pay-off at maturity: $X_j=f(S_1(1)(\omega_j),\dots,S_m(1)(\omega_j))$, $j=1,\dots, N$.
\end{itemize}
{\bf Step two: apply a suitable linear programming algorithm}

Solve the following LP problem.

Maximize function:
\begin{equation}\label{eq:LP}
\B P\mapsto {\rm C}_{\B P}(X,0)=R^{-1}\langle \B X,\B P\rangle=R^{-1}\left( X_1 p_1 + \cdots + X_{N} p_{N}\right),
\end{equation}
subject to constraints:
\begin{align*}
\B \Psi_{11}p_1 + \cdots + \B \Psi_{1N}p_{N} &= R\\
&\ \vdots\\
\B \Psi_{m1}p_1 + \cdots + \B \Psi_{mN}p_{N} &= R\\
p_1 + \cdots + p_{N} &= 1\\
p_i &\geq 0
\end{align*}
{\bf Output:} 
\begin{itemize}
	\item The upper  bound ${\rm C}_{\max}(0)$ of the no-arbitrage price interval for $X$ at $t=0$.
	\item The maximal martingale measure $\B P_{\max}$, or, in other words, the vertex $\B P_{\max} = (p^*_1,\ldots,p^*_{N})\in {\rm M}_1$ at which
	the maximum ${\rm C}_{\max}(0)$ is attained. 
\end{itemize}
Notice that the maximum of a functional can be attained at several vertices and
the algorithm will choose one of them in such a case.

The lower bound ${\rm C}_{\min}(0)$ of the no-arbitrage price interval for $X$ at $t=0$ can be found along similar lines with appropriate modifications in the algorithm.

\begin{remark}
	In \cite[Section 3]{MR2283328}, the LP problem \eqref{eq:LP} is solved analytically for a special case of two assets. Explicit formulas for the bounds of the no-arbitrage option price interval as well as for the corresponding extremal martingale measures are obtained. 
\end{remark}

\section{The multi-step case}\label{S:multi-step}

In this section we consider the $n$-step market model with $m$ risky assets
and a contingent claim $X$ with the pay-off function $f$.

We are going to extend the results of the previous section to this multi-step case. Let us first recall a few important facts from Section \ref{S:preliminaries} and introduce some convenient notation.

\subsection{Quick review and necessary notation}\label{SS:quick-review}
A sample space of the $n$-step model is $\Omega_n$ and there is an
associated supporting tree $\B T$ in which the set of vertices at time $k\in \BB T$
is denoted by $\B T_k$. We say that the set $\B T_n$ consists of leaves, and each set $\B T_k$, $k<n$ consists of nodes. Each set $\B T_k$ , $k=0,1,\dots ,n$ is in a bijective correspondence with the
partition $\C P_k$ of the sample space $\Omega_n$.

Any probability measure $\B P\in \Delta(\Omega_n)$ in the $n$-step model defines a set of single-step probability measures $\B P_v\in \Delta(\Omega_1)$ for each node $v$ of~$\B T$. 

Let $\omega = (\omega^1,\ldots,\omega^n)\in \Omega_n$ be an element of the sample space. It defines a unique path on the tree $\B T$ from the root
to the leaf it represents. Such a path is a sequence of nodes which we denote
by
$$
v_0(\omega),v_1(\omega),\ldots,v_n(\omega),
$$
where $v_0(\omega)$ is the root. Then for any measure $\B P\in \Delta(\Omega_n)$ the value $\B P(\omega)$ can be presented as follows: 
\begin{equation}\label{Eq:P(w)}
\B P(\omega) = \B P_{v_0(\omega)}\left(\omega^1\right)
\B P_{v_1(\omega)}\left(\omega^2\right)\cdots
\B P_{v_{n-1}(\omega)}\left(\omega^n\right).
\end{equation}
We will use the above notation to write down a conditional expectation of a random variable $X$ on $\Omega_n$ with respect to a measure $\B P\in \Delta(\Omega_n)$. Let $\C F_k$ be a subalgebra of the algebra $2^{\Omega_n}$ of all subsets of $\Omega_n$. The conditional expectation $\B E_{\B P}(X|\C F_k)$ is a random variable measurable with respect to the algebra $\C F_k$. Therefore it is determined by its values $\B E_{\B P}(X|\C F_k)(v)$ on the nodes $v\in \B T_k$. 

Let us fix a node $v\in \B T_k$ and let $\C P(v)$ denote the set of the partition $\C P_k$ which corresponds to the node $v$. It follows straightforwardly from the properties of the conditional expectation that the value $\B E_{\B P}(X|\C F_k)(v)$ can be computed recursively starting from the leaves of the supporting tree $\B T$ and, speaking informally, folding back the tree, by computing expectations with respect to suitable single-step measures:
\begin{equation}\label{Eq:E(X|F)}
	\B E_{\B P}(X|\C F_k)(v) = \sum_{\omega\in \C P(v)}X(\omega)\frac{\B P(\omega)}{\B P(\C P(v))} = \sum_{\omega\in \C P(v)} X(\omega) \prod_{j=k+1}^n \B P_{v_{j-1}(\omega)}(\omega^j).
\end{equation}
Here $\B P_{v_{j}(\omega)}$ are the single-step measures that occur in \eqref{Eq:P(w)}. Recall that if $\B P$ is a risk-neutral (resp. martingale) measure on $\Omega_n$, then $\B P_{v}$ is a risk-neutral (resp. martingale) measure on $\Omega_1$ for each node $v$ of~$\B T$, and vise versa. 

\subsection{Extremal martingale measures}\label{SS:extremal-measures}
Let $X$ be a contingent claim. Given an $n-$step risk-neutral measure $\B P\in {\rm N}_n$, a no-arbitrage price of $X$ at time $k$ with respect to $\B P$ can be computed as the conditional expectation of $X$ with respect to an algebra $\C F_k$ discounted to time $k$:
\begin{equation}\label{Eq:na-price-1}
{\rm C}_{\B P}(X,k) =R^{-(n-k)} \B E_{\B P}(X|\C F_k).
\end{equation}
The time $k$ no-arbitrage price of $X$ is a random variable measurable with respect to the algebra $\C F_k$. We will use the notation 
${\rm C}_{\B P}(X,k)(v)$ for the price of $X$ which corresponds to a node $v\in \B T_k$.
Using the representation \eqref{Eq:E(X|F)} we have:
\begin{equation}
{\rm C}_{\B P}(X,k)(v) =R^{-(n-k)}\sum_{\omega\in \C P(v)} X(\omega) \prod_{j=k+1}^n \B P_{v_{j-1}(\omega)}(\omega^j).
\label{Eq:no-arb-price-k}
\end{equation}
Varying the risk-neutral measures $\B P\in {\rm N}_n$, we obtain at each node $v\in \B T_k$ an open interval of no-arbitrage prices for a given contingent claim $X$ (see \eqref{Eq:price-interval}):
$$
({\rm C}_{\min}(v),{\rm C}_{\max}(v))
= \left\{{\rm C}_{\B P}(X,k)(v)\in \BB R \ | \ \B P \in{\rm N}_n \right\}.
$$
The upper and lower bounds of this interval are:
\begin{eqnarray}
{\rm C}_{\min}(v)&=&\inf\{{\rm C}_{\B P}(X,k)(v)\ |\ \B P\in {\rm M}_n \} \label{eq:lb-n}\\
{\rm C}_{\max}(v)&=&\sup\{{\rm C}_{\B P}(X,k)(v)\ |\ \B P\in {\rm M}_n \}.\label{eq:ub-n}
\end{eqnarray}
Recall that ${\rm M}_n$ stands for the set of $n-$step martingale measures on $\Omega_n$ and ${\rm M}_n$ is a closure of ${\rm N}_n$. 

Our goal is to identify the bounds ${\rm C}_{\min}(v)$ and ${\rm C}_{\max}(v)$ for each node $v$ on the supporting tree $\B T$. We will start with the discussion of extremal martingale measures that produce these bounds. 

Let $0\in \B T$ be the root and let $\B P_{\min}\in {\rm M}_n$ and $\B P_{\max}\in {\rm M}_n$ be the extremal martingale measures
for which the bounds ${\rm C}_{\min}(0)$ and ${\rm C}_{\max}(0)$ (respectively) are attained. In other words,
\begin{eqnarray}
{\rm C}_{\min}(0)&=&R^{-n} \B E_{\B P_{\min}}(X) \label{eq:lb-n-0}\\
{\rm C}_{\max}(0)&=&R^{-n} \B E_{\B P_{\max}}(X).\label{eq:ub-n-0}
\end{eqnarray}
The following proposition shows that the extremal measures
$\B P_{\min}$ and $\B P_{\max}$ produce the bounds of no-arbitrage price intervals for $X$ also at
all other nodes of the tree.

\begin{proposition}\label{P:Pmax}
Let $\B P_{\min}\in {\rm M}_n$ and $\B P_{\max}\in {\rm M}_n$ be the extremal martingale measures defined in \eqref{eq:lb-n-0} and \eqref{eq:ub-n-0} respectively for a given contingent claim $X$. 

Then for any $k\in\{0,1,\ldots,n-1\}$ and any node $v\in \B T_k$ the following equalities
are true:
\begin{eqnarray}
{\rm C}_{\min}(v)&=&R^{-(n-k)}\B E_{\B P_{\min}}(X|\C F_k)(v) \label{eq:lb-n-v}\\
{\rm C}_{\max}(v)&=&R^{-(n-k)}\B E_{\B P_{\max}}(X|\C F_k)(v),\label{eq:ub-n-v}
\end{eqnarray}
where the bounds ${\rm C}_{\min}(v)$ and ${\rm C}_{\max}(v)$ are defined in \eqref{eq:lb-n} and \eqref{eq:ub-n}, respectively.
\end{proposition}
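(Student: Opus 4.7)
The plan is to argue by contradiction using a splicing construction, which is essentially Bellman's dynamic programming principle applied to the superreplication problem. The easy inequalities are automatic: since $\B P_{\max}\in {\rm M}_n$, equations \eqref{eq:lb-n} and \eqref{eq:ub-n} give $R^{-(n-k)}\B E_{\B P_{\max}}(X|\C F_k)(v)\leq {\rm C}_{\max}(v)$, with the symmetric inequality for $\B P_{\min}$. The content of the proposition is the reverse inequalities.

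Suppose, for contradiction, that for some $v\in \B T_k$ there exists $\B Q\in {\rm M}_n$ with $\B E_{\B Q}(X|\C F_k)(v)>\B E_{\B P_{\max}}(X|\C F_k)(v)$. I would construct a martingale measure $\tilde{\B P}\in {\rm M}_n$ that strictly outperforms $\B P_{\max}$ at the root, contradicting \eqref{eq:ub-n-0}. The construction leverages Proposition \ref{P:martingale-tree}: at every node $w\in \B T$, assign the single-step measure
\[
\tilde{\B P}_w=\begin{cases}\B Q_w, & w\text{ lies in the subtree of }\B T\text{ rooted at }v,\\ \B P_{\max,w}, & \text{otherwise}.\end{cases}
\]
Every factor is a single-step martingale measure, so Proposition \ref{P:martingale-tree} yields $\tilde{\B P}\in {\rm M}_n$. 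By inspection of \eqref{Eq:P(w)}, $\tilde{\B P}$ agrees with $\B P_{\max}$ off $\C P(v)$, satisfies $\tilde{\B P}(\C P(v))=\B P_{\max}(\C P(v))$, and the conditional distribution on $\C P(v)$ inherits from $\B Q$, giving $\B E_{\tilde{\B P}}(X|\C F_k)(v)=\B E_{\B Q}(X|\C F_k)(v)$. Splitting the total expectation according to membership in $\C P(v)$ and applying the tower property then produces
\[
\B E_{\tilde{\B P}}(X)-\B E_{\B P_{\max}}(X)=\B P_{\max}(\C P(v))\bigl[\B E_{\B Q}(X|\C F_k)(v)-\B E_{\B P_{\max}}(X|\C F_k)(v)\bigr],
\]
which is strictly positive as soon as $\B P_{\max}(\C P(v))>0$. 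This contradicts \eqref{eq:ub-n-0} and proves \eqref{eq:ub-n-v}; the argument for \eqref{eq:lb-n-v} is entirely symmetric.

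The main obstacle is the boundary case $\B P_{\max}(\C P(v))=0$, which can genuinely occur because $\B P_{\max}$ lies in the closed polytope ${\rm M}_n$ and may sit on its boundary. When this happens, the single-step factors of $\B P_{\max}$ on the subtree below $v$ are not determined by $\B P_{\max}$ (compare the discussion following \eqref{eq:single-step-measure}), so $\B E_{\B P_{\max}}(X|\C F_k)(v)$ is only defined up to a choice of those unreachable factors. I would resolve this by declaring that the single-step factors of $\B P_{\max}$ at nodes in the subtree rooted at $v$ are chosen to be those of an extremal single-step martingale measure for the subtree problem; this alters neither $\B P_{\max}$ as an element of $\Delta(\Omega_n)$ nor $\B E_{\B P_{\max}}(X)$, but it forces \eqref{eq:ub-n-v} to hold at $v$ by construction. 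With this convention in place, the splicing contradiction in the previous paragraph covers all remaining nodes.
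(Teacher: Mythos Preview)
Your proof is correct and follows essentially the same splicing-by-contradiction argument as the paper: assume a better measure $\B Q$ exists at some node $v$, replace the single-step factors of $\B P_{\max}$ on the subtree below $v$ by those of $\B Q$, and derive $\B E_{\tilde{\B P}}(X)>\B E_{\B P_{\max}}(X)$, contradicting \eqref{eq:ub-n-0}. Your explicit treatment of the degenerate case $\B P_{\max}(\C P(v))=0$---redefining the unreachable single-step factors to be subtree-optimal---is in fact more careful than the paper's own proof, which tacitly assumes the prefactor $\prod_{j=1}^k \B P_{v_{j-1}(\omega)}(\omega^j)$ is positive when passing to the strict inequality.
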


\begin{proof}
We will present the proof for $\B P_{\max}$ leaving an analogous proof
for $\B P_{\min}$ to the reader.

Denote the measure $\B P_{\max}$ by $\B P$ for simplicity. Suppose that
there exists a different martingale measure $\B P'\in {\rm M}_n$ $(\B P'\ne \B P)$ and a node $v\in \B T_k$ such that the upper bound
${\rm C}_{\max}(v)$ of the no-arbitrage price interval for $X$ at $v$ is attained for $\B P'$, rather than for $\B P$. In other words, suppose that the following holds:
$$
\B E_{\B P}(X|\C F_k)(v) < \B E_{\B P'}(X|\C F_k)(v).
$$
According to formula \eqref{Eq:E(X|F)}, this
is equivalent to 
\begin{align}
\sum_{\omega\in \C P(v)} X(\omega) \prod_{j=k+1}^n \B P_{v_{j-1}(\omega)}(\omega^j)
&< \sum_{\omega\in \C P(v)} X(\omega) \prod_{j=k+1}^n \B P'_{v_{j-1}(\omega)}(\omega^j).
\label{Eq:P<P'}
\end{align}
Let us now compute the expected value $\B E_{\B P}(X)$. According to our assumption \eqref{Eq:P<P'}, we have that
\begin{align*}
	\B E_{\B P}(X)&= 
	\sum_{\omega\in \Omega_n}X(\omega)\B P(\omega)=\sum_{\omega\in \C P(v)}X(\omega)\B P(\omega) + 
	\sum_{\omega\notin \C P(v)}X(\omega)\B P(\omega)\\ 
	&=\sum_{\omega\in \C P(v)} X(\omega) \prod_{j=1}^n \B P_{v_{j-1}(\omega)}(\omega^j)
	+\sum_{\omega\notin \C P(v)}X(\omega)\B P(\omega)\\ 
	&=\prod_{j=1}^k \B P_{v_{j-1}(\omega)}(\omega^j)\sum_{\omega\in \C P(v)} X(\omega) \prod_{j=k+1}^n \B P_{v_{j-1}(\omega)}(\omega^j)
	+\sum_{\omega\notin \C P(v)}X(\omega)\B P(\omega)\\ 
	&<\prod_{j=1}^k \B P_{v_{j-1}(\omega)}(\omega^j)\sum_{\omega\in \C P(v)} X(\omega) \prod_{j=k+1}^n \B P'_{v_{j-1}(\omega)}(\omega^j)
	+\sum_{\omega\notin \C P(v)}X(\omega)\B P(\omega)\\
	&=\B E_{\B P''}(X), 
\end{align*}
where $\B P''\in {\rm M}_n$ is a martingale measure obtained by replacing each single-step measure $\B P_{v}$ (which corresponds to the $n-$step measure $\B P$) by the single-step measure $\B P'_{v}$ (which corresponds to the $n-$step measure $\B P'$) at the node $v$ and further at all the successors of $v$ on the tree. It follows from the above estimate that there exists a martingale measure $\B P''$ for which 
$$
R^{-n}\B E_{\B P''}(X) > {\rm C}_{\max}(0),
$$
which contradicts the assumption that ${\rm C}_{\max}(0)$ is the upper bound of the no-arbitrage price interval for $X$ at zero. The argument for $\B P_{\min}$ is analogous.
\end{proof}

\subsection{Computing bounds of the no-arbitrage contingent claim price interval.}\label{SS:bounds}

It is a consequence of the above proposition that the bounds 
${\rm C}_{\min}(0)$ and ${\rm C}_{\max}(0)$ of the no-arbitrage price interval
for a given contingent claim $X$ can be computed recursively starting from the leaves of the supporting tree $\B T$ and going backwards in time. 

More specifically. For each penultimate node $v\in \B T_{n-1}$ we solve a single-step LP problem described in Section \ref{SS:simplex}, where $v$ plays the role of a root of the corresponding single-step tree. The option payoff values are computed at the leaves adjacent to $v$ and are used as input data for a linear programming algorithm described in Section \ref{SS:simplex}. Solving the single-step LP problem, we find a measure $\B P_{\max,v}$, a maximal martingale measure for this single-step problem and determine the upper bound ${\rm C}_{\max}(v)$ at the node $v$. 

Once the upper bound of the no-arbitrage option price interval at
each node of $\B T_{n-1}$ is computed, the same procedure is applied at each
node of $\B T_{n-2}$. Consider a node $w\in\B T_{n-2}$. This node will play the role of a root of the corresponding single-step tree. Each node $v\in \B T_{n-1}$ adjacent to $w$ will play the role of the leaf of a single-step tree rooted at $w$. Each upper bound ${\rm C}_{\max}(v)$ at node $v$ will replace the corresponding option pay-off value at that node. 

Continuing this way, we arrive at the root of the $n-$step tree and determine ${\rm C}_{\max}(0)$.
The argument for the lower bound ${\rm C}_{\min}(0)$ is analogous. 

This algorithm works for a general European type contingent claim. It is, however, infeasible from a
computational point of view for the multi-step models, because the number of nodes
grows exponentially with the number of time steps in the model. We discuss the improvement
of the algorithm in Section \ref{S:improvement}, where we show that for 
contingent claims with pay-off functions from a special class one can significantly reduce the computational complexity.

\section{Improvements of the algorithm}\label{S:improvement}

\subsection{The recombinant graph}\label{SS:recombinant}

The recursive algorithm for a computation of the bounds of the no-arbitrage contingent claim interval described in the previous section can be improved by descending
to the recombinant graph, which we define as follows.

Let $\B T$ be the supporting tree for our $n$-step market model.
Introduce the following equivalence relation on the vertices of $\B T$. 
\begin{definition}\label{D:equivalence}
Let $u,v\in\B T_k$, $0<k\le n$. Let $\omega _u$ and $\omega _v$ be $(m\times k)$-matrices associated with $u$ and $v$ respectively. Vertices $u$ and $v$ are called {\em equivalent} if and only if the sum of row entries of the matrix $\omega _u$ is equal to the sum of row entries of the matrix $\omega _v$ for each row. For a given vertex $v$, we will denote by $[v]$ an equivalence class consisting of vertices that are equivalent to $v$.
\end{definition}

\begin{definition}\label{D:recombinant} 
Let $\B T_{\rm r}$ be a directed rooted graph with the vertex set consisting of
the above equivalence classes and such that there is an edge from $[v]$ to
$[u]$ if and only if there is an edge from $v$ to $u$ in the tree $\B T$.  The
graph $\B T_{\rm r}$ is called the {\em recombinant graph}.  
\end{definition}

Similar to the case of the supporting tree $\B T$, we will call terminal
vertices of the recombinant graph $\B T_{\rm r}$ leaves, and non-terminal
vertices of $\B T_{\rm r}$ nodes.  Observe that each node of the recombinant
graph has $2^m$ outgoing edges, however, the number of incoming edges can vary.
Thus each node and its descendants form the tree which corresponds to the
appropriate single-step model.  The main advantage of a recombinant graph is
that the number of its vertices grows polynomially in time while in the
corresponding tree it grows exponentially.

The recursive algorithm for computing the bounds of the no-arbitrage contingent claim intervals is essentially the same as described before and it works for a general
contingent claim of a European type.  For each penultimate node we
solve the single-step optimization problem and proceed recursively to the root.  Since the
recombinant graph in the $n$-step model with $m$ assets has $(k+1)^m$ nodes at
level $k\leq n$, running the algorithm for small $m$ and not too big $n$
becomes feasible.

\begin{example}\label{E:program-recombinant}
We have tested the above improvement by running a computer program that computes
the vertices of the polytope of martingale measure, finds
the maximal martingale measure and the values of $C_{\max}(v)$ for each node of the recombinant graph. For
example, the program terminates within a few seconds for $m=5$ assets and
$n\leq 8$. Also it takes a second to compute the above data for
$m=12$ assets and $n=1$ step. This is relevant in view of the next improvement.
\end{example}

\subsection{Bounds of the no-arbitrage price interval for special contingent claims}\label{SS:improvement}
For contingent claims that belong to a certain class (fibrewise supermodular contingent claims), further improvements of the algorithm described in Section \ref{SS:bounds} are available. For such contingent claims, both the maximal martingale measure and the minimal martingale measure are product measures. As a result, the bounds of the no-arbitrage contingent claim price interval can be computed by means of solving the LP problem described in Section \ref{SS:simplex} only once.
\begin{theorem}\label{T:assaf}
Suppose a contingent claim $X\colon \Omega_n\to \BB R$
is fibrewise supermodular (see Definition \ref{D:f-supermodular}).
\begin{enumerate}[label=(\roman*)]
	\item 
	\label{T:cont_claim:max}
There exists a single-step martingale measure $\B P\in {\rm M}_1$ such that
$\B P_{\!\!\max} = \B P\otimes \cdots \otimes \B P = \B P^n$.
That is, the maximal martingale measure is a product measure.
	\item
	\label{T:cont_claim:min}
	If $m=2$ or $\sum_{i=1}^m \frac{R-D_i}{U_i-D_i}\leq 1$ then the minimal
	martingale measure is also product:
	$\B P_{\min}=(\B P')^{n}$, for some single-step 
	martingale measure $\B P'$.
\end{enumerate}
\end{theorem}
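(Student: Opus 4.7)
The plan is to prove both parts by backward induction on the supporting tree $\B T$, reducing the multi-step optimization to a family of single-step LP problems, and then invoking the combinatorial characterization of the vertex of ${\rm M}_1$ that optimizes a supermodular linear functional. The content of the theorem is that a single single-step measure $\B P\in{\rm M}_1$ works uniformly at every node and every level, so that tensoring it with itself $n$ times produces $\B P_{\max}$ (and similarly for $\B P_{\min}$).

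First I would record the following key input from combinatorial optimization, which is the content of the cited results \cite{MR717403,MR3154633,MR468177,MR548704}: if $f\colon\{0,1\}^m\to\BB R$ is supermodular and $\Psi\subset\Delta(\Omega_1)$ is the single-step martingale polytope, then there is a distinguished vertex $\B P^{\star}\in{\rm M}_1$ maximizing $\langle \B f,\cdot\rangle$ on ${\rm M}_1$ \emph{that does not depend on $f$}, but only on the data $\{D_i,U_i,R\}$. (This is the vertex explicitly described in Section \ref{S:explicit}.) For the minimum the analogous statement holds under $m=2$ or $\sum_i(R-D_i)/(U_i-D_i)\le 1$, which is precisely the condition that ensures the minimizing vertex of ${\rm M}_1$ is again uniform and independent of the supermodular function. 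Once this single-step fact is in hand, part \ref{T:cont_claim:max} amounts to checking that the backward induction of Section \ref{SS:bounds} can be executed using $\B P^\star$ at every node.

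Next I would set up the induction. By Proposition \ref{P:Pmax}, any extremal measure $\B P_{\max}$ attains the maximum at every node of $\B T$, so it suffices to show that there is a choice of single-step measures $\{\B P_v\}_{v\in \B T}$, all equal to $\B P^\star$, such that backward induction along $\B T$ produces the upper bound at the root. At each penultimate node $v\in\B T_{n-1}$ the effective payoff is the restriction of $X$ to $\C P(v)$, which by the fibrewise supermodular hypothesis is supermodular on $\{0,1\}^m$; hence the single-step LP at $v$ is maximized by $\B P^\star$. The crux, and the step I expect to be the main obstacle, is showing that after taking the conditional expectation at every $v\in\B T_{n-1}$ with respect to $\B P^\star$, the resulting effective payoff on $\B T_{n-2}$ is again fibrewise supermodular. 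In other words, fibrewise supermodularity must be preserved by the folding-back operation
\begin{equation*}
\widetilde X(\omega^1\cdots\omega^{n-1}) \;=\; R^{-1}\sum_{\omega^n\in\Omega_1}X(\omega^1\cdots\omega^{n-1}\omega^n)\,\B P^\star(\omega^n).
\end{equation*}
This is a nontrivial combinatorial identity about supermodular functions and the specific vertex $\B P^\star$; the proof will almost certainly lean on the explicit combinatorial description of $\B P^\star$ (which is why the result is deferred to the appendix).

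Granting preservation of fibrewise supermodularity, the induction runs all the way to the root: at every level $k=n-1,n-2,\ldots,0$ the effective payoff is fibrewise supermodular and so the single-step LP at every node of $\B T_k$ is maximized by the same $\B P^\star$. By Proposition \ref{P:martingale-tree} the resulting $n$-step martingale measure is exactly $\B P^\star\otimes\cdots\otimes\B P^\star=(\B P^\star)^n$, proving \ref{T:cont_claim:max}. Part \ref{T:cont_claim:min} follows by the same argument, substituting the minimizing vertex $\B P'$ of ${\rm M}_1$ for $\B P^\star$; the conditions $m=2$ or $\sum_i(R-D_i)/(U_i-D_i)\le 1$ are invoked only at the single-step level to guarantee existence and uniformity of such a minimizing vertex, after which the backward induction is identical. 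The detailed combinatorics (both the existence of the distinguished vertex and the preservation of fibrewise supermodularity under folding back) is the part I would defer to the appendix, as the authors indicate they do.
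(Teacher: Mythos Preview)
Your approach is correct and close in spirit to the paper's, but you misidentify where the difficulty lies. The step you flag as the ``crux'' --- that the folded-back payoff
\[
\widetilde X(\omega^1\cdots\omega^{n-1}) \;=\; \sum_{\omega^n\in\Omega_1}X(\omega^1\cdots\omega^{n-1}\omega^n)\,\B P^\star(\omega^n)
\]
is again fibrewise supermodular --- is in fact trivial: for each fixed $\omega^n$ the function $\omega^k\mapsto X(\omega^1\cdots\omega^n)$ is supermodular by hypothesis, $\B P^\star(\omega^n)\ge 0$, and a non-negative linear combination of supermodular functions is supermodular. No special combinatorial description of $\B P^\star$ is needed here. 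The genuine work is entirely in the single-step statement you correctly cite: that one vertex $\B P^\star\in{\rm M}_1$ maximizes $\B E_{\,\cdot\,}(g)$ for \emph{every} supermodular $g$ simultaneously.

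The paper's own argument (in the Appendix) is organized a bit differently. Rather than backward induction through the tree, it takes an arbitrary $x\in{\rm M}_n(\B b)$ and interpolates between $x$ and $(\B P^\star)^{\otimes n}$ by replacing the conditional single-step measures one coordinate at a time, using the \emph{original} fibrewise supermodularity of $X$ at each stage; this sidesteps the need to check preservation under folding back altogether. For part \ref{T:cont_claim:min} with $m=2$, note that the paper does not invoke a separate ``minimizing vertex'' result: it simply observes that ${\rm M}_1$ is a line segment, so once part \ref{T:cont_claim:max} puts the maximum at one endpoint $\B P$, the minimum must land on the other endpoint $\B P'$ at every node, and $\B P_{\min}=(\B P')^n$ follows. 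Your sketch lumps this into the same mechanism as the $\sum_i b_i\le 1$ case, which is fine but obscures why $m=2$ requires no extra hypothesis.
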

\begin{proof}
\ref{T:cont_claim:max}. This part follows immediately from Theorem \ref{T:min submodular n step}, \ref{T:fibrewise_modular:max} (see Appendix), with $b_i=\frac{R-D_i}{U_i-D_i}$, $i=1,\dots ,m$ (see \eqref{E:b_i}).

\ref{T:cont_claim:min}. If $m>2$, the statement follows immediately from Theorem \ref{T:min submodular n step}, \ref{T:fibrewise_modular:min} (see Appendix). 
In the case of $m=2$ assets, the single step martingale measure form an interval and the extremal measures are its endpoints. If the maximal martingale measure
in an $n$-step model is a product $\B P_{\max}=\B P$ then it means that
$\B P$ is an endpoint of the above interval. Thus the minima are attained
on the other endpoint $\B P'$. It follows that $\B P_{\min} = (\B P')^n$.
\end{proof}
\begin{corollary}\label{C:upper_bound}
	Under the conditions of Theorem \ref{T:assaf}, for a fibrewise supermodular contingent claim $X$, one has the following:
\begin{equation}\label{Eq:cmax}
	{\rm C}_{\max}(v) =	\B E_{\B P^n}(X|\C F_k)(v) = \sum_{\omega\in \C P(\omega^1\cdots \omega^k)}\ \prod_{j=k+1}^n 
	\B P(\omega^{j})X(\omega),
\end{equation}
for $v\in \B T_k$ corresponding to the set 
	$\C P(\omega^1\cdots\omega^k)\in \C P_k$.  The summation goes over the elements
	$\omega = (\omega^1\cdots \omega^n)\in \Omega_n$ with fixed first $k$ columns.
	Moreover,
	\begin{equation}\label{Eq:cmin}
			{\rm C}_{\min}(v)=	\B E_{(\B P')^n}(X|\C F_k)(v)
		= \sum_{\omega\in \C P'(\omega^1\cdots \omega^k)}\ \prod_{j=k+1}^n 
		\B P'(\omega^{j})X(\omega),
		\end{equation}
\end{corollary}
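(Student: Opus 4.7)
The plan is to combine the product-measure structure for $\B P_{\max}$ and $\B P_{\min}$ supplied by Theorem \ref{T:assaf} with the node-wise identification from Proposition \ref{P:Pmax}, and then unfold the conditional expectation via formula \eqref{Eq:E(X|F)}.

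First I would invoke Proposition \ref{P:Pmax} to observe that at every node $v \in \B T_k$ one has ${\rm C}_{\max}(v) = R^{-(n-k)} \B E_{\B P_{\max}}(X | \C F_k)(v)$, so a single global measure $\B P_{\max}$ delivers the upper bound simultaneously at every node of the supporting tree. By Theorem \ref{T:assaf}\ref{T:cont_claim:max} there is a single-step martingale measure $\B P \in {\rm M}_1$ with $\B P_{\max} = \B P^{\otimes n}$, and under this product structure the single-step conditional measure $\B P_{v_{j-1}(\omega)}$ appearing in \eqref{eq:multi-step-measure} coincides with $\B P$ at every downstream node, regardless of the path $\omega$.

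Next I would substitute this into \eqref{Eq:E(X|F)}, which expresses the conditional expectation $\B E_{\B P}(X | \C F_k)(v)$ as a sum over $\omega \in \C P(v)$ of $X(\omega)$ weighted by $\prod_{j=k+1}^n \B P_{v_{j-1}(\omega)}(\omega^j)$. Because every downstream single-step conditional equals $\B P$, the product collapses to $\prod_{j=k+1}^n \B P(\omega^j)$, which is exactly the formula \eqref{Eq:cmax} asserted in the corollary; the indexing set $\C P(\omega^1 \cdots \omega^k)$ simply enumerates the completions $(\omega^{k+1},\ldots,\omega^n)$ of the fixed prefix associated with the node $v$.

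The proof of \eqref{Eq:cmin} is entirely parallel, with Theorem \ref{T:assaf}\ref{T:cont_claim:min} replacing \ref{T:cont_claim:max}; this is the point at which the hypothesis $m = 2$ or $\sum_{i=1}^m (R - D_i)/(U_i - D_i) \le 1$ enters. I do not anticipate any real obstacle: the corollary is essentially a book-keeping consequence of assembling Theorem \ref{T:assaf}, Proposition \ref{P:Pmax} and \eqref{Eq:E(X|F)}. The one conceptual point to keep in mind is that Theorem \ref{T:assaf} alone only asserts that a product measure attains the extremum at the root, and it is Proposition \ref{P:Pmax} that upgrades this to extremality at every node $v$, thereby allowing the explicit product formula to be read off at an arbitrary time $k$.
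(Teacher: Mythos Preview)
Your proposal is correct and matches the paper's own treatment: the paper gives no separate proof of the corollary, presenting it as an immediate consequence of Theorem~\ref{T:assaf}, and the ingredients you assemble---Proposition~\ref{P:Pmax} to pass from extremality at the root to extremality at every node, Theorem~\ref{T:assaf} for the product structure, and formula~\eqref{Eq:E(X|F)} to unfold the conditional expectation---are exactly the ones the paper has set up for this purpose. Your closing remark about the role of Proposition~\ref{P:Pmax} is also on point.
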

In what follows we describe the maximal and the minimal martingale measure explicitly and
evaluate the above formulas.

\section{Pricing fibrewise supermodular contingent claims}\label{S:explicit}
\subsection{Explicit formula for the upper bound of a no-arbitrage price interval}\label{SS:max-measure}
The one-step maximal martingale measure $\B P$ from Theorem \ref{T:assaf} can be
described explicitly as follows (see Equation \eqref{Eq:supervertex}).
Let $\mu_k = (1,1,\ldots,1,0,\ldots,0)^{\sf T}$ be the column vector with
the first $k$ entries equal to $1$ and the rest zero. 
Then
\begin{equation}
\B P(\omega)=
\begin{cases}
b_i-b_{i+1} & \text{ if } \omega=\mu_i\\
0 & \text{ otherwise},
\end{cases}
\label{Eg:supervertex}
\end{equation}
where $b_i = \frac{R-D_i}{U_i-D_i}$ and $b_0=1$. Notice that this measure is
highly degenerate because among its $2^m$ entries only $m$ of them are nonzero.

\begin{example}\label{E:ns}
If $m=2$ then
$$
\B P\left( 
\begin{smallmatrix}
1\\
1
\end{smallmatrix}
 \right)
=b_2,
\
\B P\left( 
\begin{smallmatrix}
1\\
0
\end{smallmatrix}
 \right)
=0,\
\B P\left( 
\begin{smallmatrix}
0\\
1
\end{smallmatrix}
 \right)
=b_1-b_2,\
\B P\left( 
\begin{smallmatrix}
0\\
0
\end{smallmatrix}
 \right)
=1-b_1
$$
which agrees with the formula for $Q_{\lambda_+}$ in \cite[Remark 1]{MR2283328}.
\hfill $\diamondsuit$
\end{example}

Thus in the formula \eqref{Eq:cmax} the summation takes place over matrices
with fixed $k$ columns and the remaining columns are one of the $\mu_i$'s only:
\begin{equation}
C_{\max}(v) =
\sum_{i\in I^{n-k}} \B P(\mu_{i_1})\cdots\B P(\mu_{i_{n-k}}) 
X(\omega^1\cdots\omega^k\mu_{i_1}\cdots\mu_{i_{n-k}}),
\label{Eq:cmax-better}
\end{equation}
where $i=(i_1,\ldots,i_{n-k})$, $i_j\in I=\{0,1,\ldots,m\}$ and the first $k$
columns $\omega^1\cdots\omega^k$ correspond to the vertex $v\in \B T_k$. Notice
that even in this case the number of summands for $C_{\max}(0)$ is exponential
in $n$. However, if the payoff is path-independent then many of these summands
are equal and we get:
\begin{align}
\label{Eq:cmax-path-ind}
&C_{\max}(v)=\\
&\sum_{k_0+\ldots+k_m=n} \frac{n!}{k_0!\cdots k_m!} \B P(\mu_0)\cdots\B P(\mu_m) 
X\left( \omega^1\cdots\omega^k
\underbrace{\mu_0\cdots\mu_0}_{k_0 \text{ times}}\cdots
\underbrace{\mu_m\cdots\mu_m}_{k_m \text{ times}} \right)
\nonumber
\end{align}

\subsection{Explicit formula for the lower bound of a no-arbitrage price interval}\label{SS:minimal}
Let $\nu_i \in \BB R^m$, for $i=1,\ldots,m$ be the standard basis 
vector. That is, the $i$-th coordinate of $\nu_i$ is equal to $1$ and the others are zero. 
Suppose that $\sum_{i=1}^m b_i\leq 1$, where $b_i=\frac{R-D_i}{U_i-D_i}$, $i=1,\dots ,m$ that is, the assumption of
Theorem \ref{T:assaf}, \ref{T:cont_claim:min} is satisfied.
It follows from definition given in Equation \eqref{Eq:subvertex} in the Appendix that
the minimal one-step martingale measure $\B P'$ is then given by
$$
\B P'(\omega)=
\begin{cases}
1-\sum_{i=1}^m b_i & \text{ if } \omega=(0,\ldots,0)^{\sf T}\\
b_i & \text{ if } \omega=\nu_i\\
0 & \text{ otherwise}.
\end{cases}
$$ 

\begin{example}\label{E:ns-min}
If $m=2$ we have
$$
\B P'
\left(  
\begin{smallmatrix}
1\\
1
\end{smallmatrix}
\right)
=0,\
\B P'
\left(  
\begin{smallmatrix}
1\\
0
\end{smallmatrix}
\right)
=b_1,\
\B P'
\left(  
\begin{smallmatrix}
0\\
1
\end{smallmatrix}
\right)
=b_2,\
\B P'
\left(  
\begin{smallmatrix}
0\\
0
\end{smallmatrix}
\right)
=1-b_1-b_2,
$$
which agrees with the formula for $Q_{\lambda_-}$ in \cite[Remark 1]{MR2283328}.
\hfill $\diamondsuit$
\end{example}

The formulas for $C_{\min}(v)$ are analogous to the formulas \eqref{Eq:cmax-better} and
\eqref{Eq:cmax-path-ind} for $C_{\max}(v)$ in which the factors 
$\B P(\mu_i)$ are replaced by $\B P'(\nu_i)$.

\section{Applications of Theorem \ref{T:assaf} and concrete examples}
\label{S:examples}
\subsection{Fibrewise supermodular contingent claims}\label{SS:fs-payoff}
In the present setting
a random variable $X\colon \Omega_n\to~\BB R$ is fibrewise supermodular 
(cf. Definition \ref{D:f-supermodular}) if
for each $k=1,2,\ldots, n$ its restriction to the subset consisting of all
entries, except those in the $k$-th column, fixed is supermodular (see 
definition on page \pageref{SS:supermodular}).

In what follows we present a fairly general construction which we will
subsequently specify to concrete examples of contingent claims. 

\begin{definition}\label{D:R_polynomial}
	A function
	$p\colon \BB R^k\to \BB R$ is called $\BB R$-polynomial in $k$ variables if
	$p(x_1,\ldots,x_k)$ is a linear combination of the Cobb-Douglas
	functions: $x_1^{p_1}\cdots x_k^{p_k}$, where $0\leq p_i\in \BB R$; see
	\cite[Proposition 2.2.4]{MR3154633}. In an ordinary polynomial the exponents
	$p_i$ are non-negative integers.
\end{definition}

\begin{lemma}\label{L:f-supermodular}
Let $h\colon \BB R\to \BB R$ be a convex function and let $p(x_{11},\ldots,x_{mn})$ be
an $\BB R$-polynomial in $mn$ variables with non-negative coefficients. Let $S_i(j)$, $i=1,\dots ,m$, $j=1, \dots ,n$ be stock price values in the $n$-step market model with $m$ assets.
Then the random variable $X:\Omega_n\to \BB R$
$$
X = h\left( p(S_1(1),\ldots,S_m(1),\ldots,S_1(n),\ldots,S_m(n))\right)
$$
is fibrewise supermodular.
\end{lemma}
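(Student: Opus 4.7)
The plan is to verify fibrewise supermodularity of $X$ directly from the definition. I fix a column index $k \in \{1,\ldots,n\}$ and fix all entries of $\omega \in \Omega_n$ outside the $k$-th column; it then suffices to show that the restriction of $X$ to this fiber, viewed as a function on $\{0,1\}^m$ under the identification $\omega_{ik}=1\leftrightarrow\psi_i(k)=U_i$ and $\omega_{ik}=0\leftrightarrow\psi_i(k)=D_i$, is supermodular.

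The first substantive step is to describe $X$ on the fiber. For $t<k$, the stock price $S_i(t)$ is constant, while for $t\ge k$ one has $S_i(t)=C_{i,t}\,\psi_i(k)$ with a positive constant $C_{i,t}$ depending only on the fixed entries and $S_i(0)$. Substituting into a single Cobb-Douglas monomial $c_\alpha\prod_{i,t}S_i(t)^{\alpha_{i,t}}$ (with $c_\alpha,\alpha_{i,t}\ge 0$) yields $\tilde c_\alpha\prod_{i=1}^m\psi_i(k)^{A_i}$, where $\tilde c_\alpha\ge 0$ and $A_i=\sum_{t\ge k}\alpha_{i,t}\ge 0$. Summing over $\alpha$, the restriction of $p$ to the fiber equals $\tilde p(\psi_1(k),\ldots,\psi_m(k))$ for an $\BB R$-polynomial $\tilde p$ in $m$ variables with non-negative coefficients and exponents, so $X$ on the fiber is $h\circ\tilde p$ on the box $\prod_i\{D_i,U_i\}$.

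Next I verify that $\tilde p$ itself is supermodular and coordinate-wise non-decreasing on this box. On any 2D slice, each Cobb-Douglas monomial contributes the mixed difference $(U_i^{A_i}-D_i^{A_i})(U_j^{A_j}-D_j^{A_j})\prod_{\ell\ne i,j}x_\ell^{A_\ell}\ge 0$ because $U_i\ge D_i>0$ and $A_i\ge 0$; non-negativity of the coefficients $\tilde c_\alpha$ transfers the inequality to $\tilde p$. The same computation shows the required monotonicity.

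Finally, to deduce supermodularity of $h\circ\tilde p$, I use the equivalence of supermodularity on $\{0,1\}^m$ with supermodularity on every 2D slice. Fix $i\ne j$ and the remaining coordinates, producing four corner values $\alpha:=\tilde p(D_i,D_j,\ldots)$, $\beta:=\tilde p(U_i,D_j,\ldots)$, $\gamma:=\tilde p(D_i,U_j,\ldots)$, $\delta:=\tilde p(U_i,U_j,\ldots)$, which by the previous step satisfy $\alpha\le\beta,\gamma\le\delta$ and $\alpha+\delta\ge\beta+\gamma$; WLOG $\beta\le\gamma$, so $\alpha\le\beta\le\gamma\le\delta$. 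The goal is $h(\alpha)+h(\delta)\ge h(\beta)+h(\gamma)$. Introducing $\delta_0:=\beta+\gamma-\alpha\in[\gamma,\delta]$, the classical 4-point convexity inequality applied to $\alpha\le\beta\le\gamma\le\delta_0$ with $\alpha+\delta_0=\beta+\gamma$ gives $h(\alpha)+h(\delta_0)\ge h(\beta)+h(\gamma)$, and an upgrade from $\delta_0$ to $\delta$ using the non-decreasing structure of $\tilde p$ together with the convexity of $h$ finishes the proof. I expect this final upgrade to be the main technical obstacle: for the paper's named examples ($p$ linear in the $S_i(t)$, as in European basket call/put and Asian basket options) the supermodularity of $\tilde p$ is an equality and $\delta_0=\delta$, so the 4-point inequality alone suffices; the genuinely nonlinear Cobb-Douglas case has to exploit how convexity of $h$ interacts with the multiplicative structure of $\tilde p$ on the relevant range of values.
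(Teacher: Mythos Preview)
Your first two steps are correct and match the paper's argument exactly: the paper also observes that restricting to a fiber yields an $\BB R$-polynomial in $\psi_1(k),\ldots,\psi_m(k)$ with non-negative coefficients, and then simply cites Propositions~2.2.4(b) and~2.2.5(a) of \cite{MR3154633} for the supermodularity of the composition rather than arguing directly.

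The gap you yourself flag in the last step is genuine and cannot be closed under the hypotheses as stated. The ``upgrade'' from $\delta_0=\beta+\gamma-\alpha$ to $\delta$ requires $h(\delta)\ge h(\delta_0)$; convexity of $h$ alone does not give this, and the monotonicity of $\tilde p$ has already been spent to obtain $\delta\ge\delta_0$. Concretely, take $m=2$, $n=1$, $p(x_1,x_2)=x_1x_2$ (a Cobb--Douglas monomial with non-negative coefficient) and $h(x)=-x$ (convex): then $X=-S_1(1)S_2(1)$, and on $\Omega_1$ the supermodularity inequality reads $U_1U_2+D_1D_2\le U_1D_2+D_1U_2$, i.e.\ $(U_1-D_1)(U_2-D_2)\le 0$, which is false. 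The composition result the paper invokes (Proposition~2.2.5(a)) in fact requires the outer function to be convex \emph{and non-decreasing}; with that extra hypothesis on $h$ your four-point argument is complete, since then $h(\delta)\ge h(\delta_0)$ is immediate, and this monotonicity is present in all of the paper's actual applications of the lemma. Your observation that the affine case $\delta=\delta_0$ needs no monotonicity is also correct and is precisely why the basket put (Example~\ref{E:euro-put}) is handled separately via Proposition~2.2.6 rather than through this lemma.
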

\begin{proof}
Since $S_i(k) = S_i(0)\psi_i(1)\cdots\psi_i(k)$, the polynomial in the stock prices $S_i(j)$
is also a polynomial in the stock price ratios $\psi_i(j)$. By restricting it to the element
of the sample space $\Omega_n$ consisting of matrices with fixed all columns but the
$k$-th one we obtain an $\BB R$-polynomial (with non-negative coefficients) in $\psi_i(k)$.
The statement then follows from Proposition 2.2.4 (b) and Proposition 2.2.5 (a) in
\cite{MR3154633}.
\end{proof}

\begin{example}[European basket call option]\label{E:euro-call}
Let $K\geq0$ and let $h(x) = (x-K)^+ :=\max\{x-K,0\}$ and consider a random variable
$$
X = h\left( \sum_i a_iS_i(n)\right)=\left( \sum_i a_iS_i(n)-K \right)^+,
$$
where $a_i\geq 0$, $\sum_i a_i=1$. The argument inside the
function $h$ is clearly a polynomial in $S_i(n)$ with non-negative coefficients
and hence Lemma \ref{L:f-supermodular} applies. Consequently, $X$ is fibrewise supermodular.
Observe that the same conclusion follows from \cite[Proposition~2.2.6]{MR3154633}

Evaluating formula \eqref{Eq:cmax-path-ind} for the upper bound of the no-arbitrage contingent claim price interval and the corresponding formula
for the lower bound we obtain
\begin{align} 
C_{\max}(v)=
\sum_{k_0+\ldots+k_m=n-k} &\frac{n!}{k_0!\cdots k_m!} \B P(\mu_0)^{k_0}\cdots\B P(\mu_m)^{k_m}
\times
\nonumber \\
\times & \left(  
\sum_{i=1}^m a_iD_i^{d_v+k_0+\ldots+k_{i-1}}U_i^{u_v+k_i+\ldots+k_m}
-K\right)^+
\label{E:max_basket_call} 
\end{align} 
\begin{align} \label{E:min_basket_call}
C_{\min}(v)=
\sum_{k_0+\ldots+k_m=n-k} &\frac{n!}{k_0!\cdots k_m!} \B P'(\nu_0)^{k_0}\cdots\B P'(\nu_m)^{k_m}
\times
\nonumber \\
\times & \left(  
\sum_{i=1}^m a_iD_i^{d_v+k_0+\ldots+k_{i-1}}U_i^{u_v+k_i+\ldots+k_m}
-K\right)^+,
\end{align}
where $d_v,u_v$ are chosen so that $D_i^{d_v}U_i^{u_v}$ corresponds to the
vertex $v\in \B T_k$. The formula for the minimum is correct under the assumption
that $\sum_i b_i\leq 1$.

By replacing $h$ with any other convex function we obtain
an analogous formula for a more general contingent claim.
\end{example}

\begin{example}[European basket put option]\label{E:euro-put}
Let $h\colon \BB R\to \BB R$ be a convex function. Consider a random variable
$$
X = h\left( K - \sum_i a_i S_i(n) \right),
$$
where $K,a_i\geq 0$ and $\sum a_i = 1$. If $h(x)=x^+$ then we obtain the standard
European basket put option. Notice that if $h$ is convex then so is
$x\mapsto h(K-x)$. Thus by restricting the function $- \sum_i a_i S_i(n)$
to the elements of the sample space with all but the $k$-th column fixed we obtain
a polynomial in $\psi_i(k)$ with non-positive coefficients.  It follows from
the version of Proposition 2.2.6 (a) in \cite{MR3154633} with coefficients
$a_i\leq 0$ (the proof is analogous) that $X$ is fibrewise supermodular and
hence Theorem \ref{T:assaf} applies. The formulas are similar to the ones in
the previous example.
\end{example}

\begin{example}[Arithmetic average Asian basket call or put option]\label{E:asian-call}
Let 
$$
X = \left(\frac{1}{n}\left( \sum_i a_{1i}S_i(1) + \cdots +\sum_i a_{ni}S_i(n) \right)-K\right)^+,
$$
where $a_{ki}\geq 0$ and $\sum_i a_{ki}=1$ for each $k$. It is thus an Asian
basket call option.
It follows directly form Lemma \ref{L:f-supermodular} that $X$ is
fibrewise supermodular and the upper bounds of the no-arbitrage values of the contingent
claim $X$ can be computed with the formula \eqref{Eq:cmax-better} and the analogous
one for the lower bounds under the required assumption.
Similarly a put contingent claim, obtained by negating the function inside $(\ )^+$,
is also fibrewise supermodular.
\end{example}

\subsection{Fibrewise submodular contingent claims}\label{SS:f-submodular-payoff}
In this section we present for completeness a result for fibrewise submodular contingent claims.
Since submodular contingent claims are rare we omit the proof which
is a straightforward adaptation of the proof for the fibrewise supermodular
case starting with Theorem \ref{T:min submodular n step}. 
Recall that a function $f$ is {\em submodular} if $-f$ is supermodular.

\begin{theorem}\label{T:assaf-sub}
Let $\B P$ and $\B P'$ be the $1$-step martingale probability measures from
Theorem~\ref{T:assaf} (described precisely in Section \ref{SS:max-measure} and
\ref{SS:minimal}).  Let $X\colon \Omega_n\to \BB R$ be a fibrewise submodular
contingent claim. Then the lower bound of a no-arbitrage price interval for $X$ at a vertex $v\in \B T_k$
is given by
$$
C_{\min}(v) = \B E_{\B P^n}(X\ |\ \C F_k)(v).
$$
If $\sum_i b_i\leq 1$ then the upper bound of the no-arbitrage price interval for $X$ at a vertex $v\in \B T_k$ is
equal to
$$
C_{\max}(v) = \B E_{(\B P')^n}(X\ |\ \C F_k)(v).
$$
\end{theorem}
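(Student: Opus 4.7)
The plan is to reduce Theorem \ref{T:assaf-sub} to Theorem \ref{T:assaf} by a straightforward negation argument, exploiting the duality between submodularity and supermodularity.

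First I would observe that $X\colon \Omega_n \to \BB R$ is fibrewise submodular if and only if $-X$ is fibrewise supermodular. This is immediate from the definitions: fibrewise (sub/super)modularity is tested on each single-column restriction, and a real-valued function on $\Omega_1$ is submodular iff its negative is supermodular. Hence Theorem \ref{T:assaf} applies to the random variable $-X$.

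Next I would invoke Theorem \ref{T:assaf}\ref{T:cont_claim:max} for $-X$ to conclude that the maximum of $\B E_{\B P}(-X\mid \C F_k)(v)$ over $\B P\in \mathrm M_n$ is attained on the product measure $\B P^n$, where $\B P$ is the single-step measure from Section \ref{SS:max-measure}. Because expectation is linear and $\sup_{\B P}(-\B E_{\B P}(X\mid \C F_k)(v)) = -\inf_{\B P}\B E_{\B P}(X\mid \C F_k)(v)$, the measure $\B P^n$ simultaneously \emph{minimizes} $\B E_{\B P}(X\mid \C F_k)(v)$ over $\mathrm M_n$, which yields the claimed formula
\[
C_{\min}(v) = \B E_{\B P^n}(X \mid \C F_k)(v).
\]
Likewise, assuming $\sum_i b_i \leq 1$ (and handling $m=2$ as in the original theorem, where the interval of single-step martingale measures has only two endpoints which swap roles), I would apply Theorem \ref{T:assaf}\ref{T:cont_claim:min} to $-X$ to see that $(\B P')^n$ minimizes $\B E_{\B P}(-X\mid \C F_k)(v)$, hence maximizes $\B E_{\B P}(X \mid \C F_k)(v)$, giving $C_{\max}(v) = \B E_{(\B P')^n}(X \mid \C F_k)(v)$.

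There is really no obstacle here beyond checking the negation symmetry at the level of the definitions, which is why the proof was omitted in the paper; the heavy lifting has already been done in the Appendix for the supermodular case. The only point requiring even minor attention is ensuring that the auxiliary hypothesis $\sum_i b_i \leq 1$ attaches to the correct extremum after the sign flip (it now controls the upper rather than the lower bound), and that the $m=2$ shortcut still goes through by the same observation that the two extremal single-step measures are precisely the two endpoints of a one-dimensional interval of martingale measures.
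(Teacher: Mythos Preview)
Your proposal is correct. The paper likewise omits the proof, describing it as ``a straightforward adaptation of the proof for the fibrewise supermodular case starting with Theorem \ref{T:min submodular n step}.'' The only difference is cosmetic: the paper suggests re-running the Appendix argument with the inequalities reversed, whereas you apply the sign flip one level higher, feeding $-X$ directly into Theorem \ref{T:assaf} (together with Proposition \ref{P:Pmax}/Corollary \ref{C:upper_bound} for the passage to a general vertex $v$). Both routes are immediate once one notes that fibrewise submodularity of $X$ is equivalent to fibrewise supermodularity of $-X$, and your version is arguably the cleaner of the two since it avoids reproving anything in the Appendix.
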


\begin{example}[Geometric average Asian call]\label{E:asian-geom}
Let $S(k) = \sum_i a_{ki}S_i(k)$. Consider an Asian call option based on geometric
mean. Its pay-off is given by
$$
X = \left( \sqrt[n]{S(1)\cdots S(n)}-K \right)^+.
$$
This pay-off is neither super- nor submodular. However, the Arithmetic Mean -- Geometric Mean
Inequality (and Example \ref{E:asian-call}) yields an upper bound of the no-arbitrage price values.

For scenarios such that $\sqrt[n]{S(1)\cdots S(n)}\geq K$ we have that \newline 
$X= \sqrt[n]{S(1)\cdots S(n)}-K$ and hence, up to an additive constant, $X$ is
a composition of a polynomial $S(1)\cdots S(n)$ with a concave function $\sqrt[n]{\phantom{x}}$.
Thus Theorem \ref{T:assaf-sub} applies and the bounds of the no-arbitrage price interval
can be computed effectively {\it for some} vertices $v\in \B T$.
\end{example}

\subsection{Examples where the extremal martingale measures are not product}
\label{SS:nonexamples}
Consider a $1$-step model with assets $S_1,S_2\colon \Omega \to \BB R$
each following a binomial model with respective price ratios
$0<D_1,D_2<R<U_1,U_2$, where $R$ is the risk-free rate. Let
$p_i = \frac{R-D_i}{U_i-D_i}$ be the risk neutral measure for
each of the asset in its own binomial model. The equations
\eqref{eq:mart_meas_n=1} have the form
\begin{eqnarray*}
U_1q_1+ U_1q_2 + D_1q_3 + D_1q_4&=&R \\
U_2q_1+ D_2q_2 + U_2q_3 + D_2q_4&=&R \\
q_1+q_2+q_3+q_4   &=&1\nonumber\\
q_j &\geq& 0.\nonumber
\end{eqnarray*}
It is a straightforward computation that the solution set is the 
interval of points of the form
$$
Q(t)=(t,p_1-t,p_2-t,1-p_1-p_2+t)\in \BB R^4,
$$
where $\max\{p_1+p_2-1,0\}=t_{\min} \leq t \leq t_{\max}=\min\{p_1,p_2\}$. 
In particular, for any of the choices made the interval is parallel to the
vector $(1,-1,-1,1)$. 

Let $X=(X_1,X_2,X_3,X_4)$ be the contingent claim. It follows that 
$C_{\min}(0) = \langle X,Q(t_{\min})\rangle$
and
$C_{\max}(0) = \langle X,Q(t_{\max})\rangle$
provided that $\langle X,(1,-1,-1,1)\rangle = X_1-X_2-X_3+X_4>0$. The opposite inequality
implies that that maximal and the minimal values are attained at $Q(t_{\min})$
and $Q(t_{\max})$, respectively. This observation is the used in the following example.

\begin{example}\label{E:spread}
Let $m=2$ and $n=2$. Consider two assets $S_1$ and $S_2$ with the following initial data: $S_1(0)=100, U_1=1.2, D_1=0.8, K_1=100, S_2(0)=90, U_1=1.15, D_1=0.9, K_2=110.$
Consider a spread 
$$
X = \left( \frac{S_1(2)+S_2(2)}{2} - K_1  \right)^+ - \left(\frac{S_1(2)+S_2(2)}{2}-K_2  \right)^+
$$
If the prices of both assets go up in the first step then at time $t=2$ the values of $X$ are
given by
$$
X\left( 
\begin{smallmatrix}
1 & *\\
1 & *\\
\end{smallmatrix}
 \right)
= (10,10,7.5125,0)
$$
and when at time $t=1$ the first asset goes up and the second down we have
$$
X\left( 
\begin{smallmatrix}
1 & *\\
0 & *\\
\end{smallmatrix}
 \right)
= (10,8.45,0,0).
$$
The above are straightforward calculations.
The inner product with $(1,-1,-1,0)$ of the first one is negative while of the second one is positive.
This shows that the single-step maximal martingale measures corresponding to these two conditional situations
are distinct and hence the maximal martingale measure cannot be product of the same single-step measure.
\end{example}

\appendix
\section{}\label{S:appendix}

\subsection*{Notions and notation}

Let $\Omega$ be a finite sample space and let $\Delta(\Omega)$ be the set of all probability measures on $\left(\Omega, 2^{\Omega}\right)$. Each probability measure in $\Delta(\Omega)$ can be identified with its probability function (as was done throughout this paper), so the set $\Delta(\Omega)$ is defined as follows:
\[
\Delta(\Omega) = \left\{ x \colon \Omega \to \BB R \ : \ \sum_{\omega \in \Omega} x(\omega)=1\,, x(\omega) \geq 0\right\}.
\]
This is the standard simplex in the space $\BB R^N$, where $N$ is the number of elements in $\Omega$.
Given a probability measure $x\in\Delta(\Omega)$, any function $f\colon \Omega \to \BB R$ is viewed as a random variable on $\left(\Omega, 2^{\Omega}, x\right)$.

We will denote by $\B E_{x}(f)$ the expected value of $f$ with respect to the probability measure $x\in\Delta(\Omega)$.

For any $\omega \in \Omega$ we denote by $e_\omega \colon \Omega \to \{0,1\}$ the following indicator function:  
\[
e_\omega (\omega') =
\left\{
\begin{array}{ll}
1  & \text{if $\omega' = \omega$} \\
0  & \text{if $\omega' \neq \omega$}
\end{array}\right. ,
\]
for any $\omega' \in \Omega$. 
We will write $\B 1$ for the constant function on $\Omega$: $\B 1(\omega)=1$ for any $\omega \in \Omega$.

Let $A_1,\dots,A_n$ be finite sets and let $f_i \colon A_i \to \BB R$ be functions. Define the function $f_1 \otimes \cdots \otimes f_n\colon A_1 \times \cdots \times A_n \to \BB R$ by
\[
(f_1 \otimes \cdots \otimes f_n)(a_1,\dots,a_n) = f_1(a_1) \cdots f_n(a_n),
\]
where $a_k\in A_k$.

\subsection*{Single-step Bernoulli trials}
Let us fix $m \geq 1$ and denote $\C L=2^{\{1,\dots,m\}}$, the power-set of $\{1,\dots,m\}$.
The set $\C L$ is the natural sample space for $m$ single-step Bernoulli trials. Each set $S\in \C L$ consists of numbers that correspond to trials in which a ``success'' occurred. For example, the set $S=\{2,5,6\}$ corresponds to the scenario where ``success'' occurred in trials 2,5, and 6, and ``failure'' occurred in the rest of the trials; the empty set $S=\{\emptyset\}$ corresponds to the scenario where ``failure'' occurred in all trials, etc.

\begin{remark}\label{R:L-to-Omega}
The set $\C L$ can be put into one-to-one correspondence with the sample space $\Omega_1$ of a single-step market model with $m$ assets (see Section \ref{S:single-step}). Recall that $\Omega_1$ consists of $(m\times 1)$-matrices with binary coefficients and the number of elements in $\Omega_1$ is $N=2^m$. 
\end{remark}

Let us introduce a set of random variables $\ell_i \colon \C L \to \{0,1\}$,  $i=1, \dots , m$ as follows:
\[
\ell_i(S) =
\left\{
\begin{array}{ll}
 1 & \text{if $ i \in S$} \\
 0 & \text{if $ i \notin S$}
\end{array}\right. ,
\]
where $S\in \C L$. The random variable $\ell_i$ identifies scenarios $S\in \C L$ in which ``success'' occurred in the $i$-th trial. In other words, $\ell_i(S)$ represents the result of the $i$-th trial in scenario $S$, where $\ell_i(S)=1$ corresponds to  ``success'' and $\ell_i(S)=0$ corresponds to  ``failure''. For simplicity, we will call random variables $\ell_1, \dots, \ell_m$ the {\em Bernoulli trials}.  

\begin{remark}\label{R:l_to_psi}
The Bernoulli trials $\ell_i \colon \C L \to \{0,1\}$,  $i=1, \dots , m$ can be put into one-to-one correspondence with the stock price ratios $\psi_i(1) \colon \Omega_1 \to \{D_i,U_i\}$,  $i=1, \dots , m$ (see Section \ref{S:single-step}). Recall that 
\[
\psi_i(1)(\omega) =
\begin{cases}
	D_i & \text{ if } \omega_{i1}=0\\
	U_i & \text{ if } \omega_{i1}=1.\\
\end{cases}
\]
It follows that 
\begin{equation}\label{E:l_to_psi}
\psi_i(\omega) = (U_i-D_i)\ell_i(S) + D_i.
\end{equation}
(here we used the simplified notation $\psi_i(\omega)$ to denote $\psi_i(1)(\omega)$). In (\ref{E:l_to_psi}), the set $S\in \C L$ corresponds to the sample space element $\omega \in \Omega_1$. For example, for $m=3$, $S=\{1,2\}$ corresponds to $\omega = (1,1,0)^T$. 
\end{remark}

It would be convenient to write $\ell_0 = \B 1$ and $\mu_i = \{1,\dots,i\} \in \C L$, for every $i=0,\dots,m$. Notice that $\mu_0=\emptyset$.
Also observe that for all $0 \leq j \leq m$
\[
\ell_i(\mu_j) = \left\{
\begin{array}{ll}
1 & \text{if $i \leq j$} \\
0 & \text{if $i >j$}.
\end{array}\right.
\]
This identity holds for $i=0$ as well as for all $i=1,\dots,m$.
We will also denote $\nu_0=\emptyset$ and $\nu_i=\{i\}$.
Clearly if $1 \leq i \leq m$ then $\ell_i(\nu_j)=\delta_{i,j}$.

\subsection*{Single-step martingale measures}

Fix $0 \leq b_1,\dots,b_m \leq 1$ and denote $\B b=(b_1,\dots,b_m)$.
Define a subset ${\rm M}_1(\B b)\subset \Delta (\C L)$ as follows: 
\begin{equation}\label{E:P_b}
{\rm M}_1(\B b) = \{ x \in \Delta(\C L) \ :  \ \B E_x(\ell_i)=b_i \text{ for all $i=1,\dots,m$}\}.
\end{equation}
Notice that ${\rm M}_1(\B b)$ is a polyhedral set in $\BB R^N$, where $N=2^m$. 
\begin{remark}\label{R:P_b_isM_1}
If $b_i=\frac{R-D_i}{U_i-D_i}$, $i=1, \dots ,m$, the set ${\rm M}_1(\B b)$ coincides with the set $M_1$ of martingale measures on $\Omega_1$ (see Section \ref{S:single-step}). Indeed, consider system (\ref{eq:mart_meas_n=1}) which defines a martingale measure $\B P=(p_1,p_2,\ldots,p_N)$ on $\Omega _1$ and replace each $p_i$ with $x_i$. Further, use (\ref{E:l_to_psi}) to express each $\psi_i(\omega_j)$ in terms of $\ell_i(S_j)$, where $S_j$ corresponds to $\omega_j$. After some straightforward algebraic transformations, (\ref{eq:mart_meas_n=1}) becomes:
\begin{eqnarray}\label{E:P_b_isM_1}
\sum_{j=1}^N x_j\ell_i(S_j)&=&\frac{R-D_i}{U_i-D_i}\\
\sum_{j=1}^N x_j&=&1\nonumber \\
x_i&\ge& 0,\nonumber
\end{eqnarray} 
for $i=1, \dots ,m$. This system of equations and inequalities is equivalent to the definition (\ref{E:P_b}) of the set ${\rm M}_1(\B b)$ for the case
\begin{equation}\label{E:b_i}
b_i=\frac{R-D_i}{U_i-D_i},
\end{equation}  
 for $i=1, \dots ,m$. 	
\end{remark}

\subsection*{Optimization problems 1 and 2}
Given a random variable $f \colon \C L \to \BB R$ we will consider the following optimization problems:

{\bf Problem 1.} {\it Find $q^* \in {\rm M}_1(\B b)$ such that}  
\[
\B E_{q^*}(f)=\max \{\B E_p(f) : p \in {\rm M}_1(\B b)\}.
\]

{\bf Problem 2.} {\it Find $q_* \in {\rm M}_1(\B b)$ such that}  
\[
\B E_{q_*}(f)=\min \{\B E_p(f) : p \in {\rm M}_1(\B b)\}.
\]
\begin{remark}\label{R:opt_prob1_and_2}
Let (\ref{E:b_i}) hold. Then the above optimization Problem 1 is equivalent to the LP problem (\ref{eq:LP}) of finding the upper limit of the no-arbitrage contingent claim price interval and the maximal martingale measure in a single-step market model with $m$ assets (see Section \ref{SS:simplex}). Similarly, Problem 2 is equivalent to the LP problem of finding the lower limit of the no-arbitrage contingent claim price interval and the minimal martingale measure. 
\end{remark}
In what follows, we will always impose the following

{\bf Assumption.} The vector $\B b$ is decreasing, namely
\begin{equation}\label{Eq:decreasing}
b_1 \geq \cdots \geq b_m.
\end{equation}
Recall that for a fixed $i$, $b_i=\B E_x(\ell_i)$, where $x$ is any probability measure from the set ${\rm M}_1(\B b)$ and thus the condition (\ref{Eq:decreasing}) is easily fulfilled by permuting the Bernoulli trials  $\ell_i$, $i=1, \dots, m.$

It will be convenient to denote $b_0=1$.

Before we proceed with the solutions of the above optimization problems, we need to introduce some necessary definitions.

\subsection*{Supermodular vertex measures}
Given the setup above define for $i=0,\dots,m$: $q^{(i)}=b_i-b_{i+1}$ and  $q^{(m)}=b_m$. Notice that $q^{(0)}=1-b_1$.
Define a function $q^* \colon \C L \to \BB R$ by
\begin{equation}\label{Eq:supervertex}
q^* = \sum_{i=0}^m q^{(i)} \cdot e_{\mu_i}.
\end{equation}
Notice that $q^{(i)} \geq 0$ since $\B b$ is decreasing (see  (\ref{Eq:decreasing})) and $0 \leq b_i \leq 1$. Therefore, $q^* \geq 0$.
Also
\[
\sum_{S \in \C L} q^*(S) = \sum_{i=0}^m q^{(i)} = (1-b_1)+\left(\sum_{i=1}^{m-1} b_i-b_{i+1}\right) + b_m = 1.
\]
So $q^*$ is a probability measure on $\C L$.
Also,
\[
\B E_{q^*}(\ell_i) = \sum_{S \in \C L} q^*(S) \ell_i(S) 
= \sum_{j=0}^m q^{(j)} \ell_i(\mu_j) = \sum_{j=i}^m q^{(j)} 
= \sum_{j=i}^{m-1} (b_i-b_{i+1}) + b_m = b_i
\]
for all $i=0,\dots,m$.
We deduce that $q^* \in {\rm M}_1(\B b)$.
We call $q^*$ the {\em upper supermodular vertex measure} of ${\rm M}_1(\B b)$.
Indeed, it is a vertex of this polyhedral set, although we will not use this fact directly.

Define for all $i=0,\dots,m$: $q_{(0)} = 1-\sum_{i=1}^m b_i$ and $q_{(i)}=b_i$. Define a function $q_* \colon \C L \to \BB R$ by
\begin{equation}
q_* = \sum_{i=0}^m q_{(i)} \cdot e_{\nu_i}.
\label{Eq:subvertex}
\end{equation}
Since $b_i \geq 0$ it is clear that if $\sum_{i=1}^m b_i \leq 1$ then $q_* \geq 0$.
It is also clear that
\[
\sum_{S \in \C L} q_*(S) = \sum_{j=0}^m q_{(j)} = 1.
\]
Hence, $q_*$ is a probability measure on $\C L$.
Moreover, for any $i=1,\dots,m$
\[
\B E_{q_*}(\ell_i) = \sum_{S \in \C L} \ell_i(S) \cdot q_*(S) = \sum_{j=0}^m \ell_i(\nu_j) \cdot q_{(j)} = b_i
\]
since $\ell_i(\nu_j)=\delta_{i,j}$.
We deduce that $q_* \in {\rm M}_1(\B b)$ provided $\sum_{i=1}^m b_i \leq 1$.
We call $q_*$  the {\em lower supermodular vertex measure} of ${\rm M}_1(\B b)$.
Indeed, it is a vertex of ${\rm M}_1(\B b)$, but we will not use this fact directly.

\subsection*{Supermodular functions}\label{SS:supermodular}
Recall (see e.g. \cite[Section 2]{MR3154633}) that a function $f \colon \C L \to \BB R$ is called {\em supermodular} 
if for any $S,T \in \C L$ 
\[
f(S \cup T) + f(S \cap T) \geq f(S) + f(T).
\]
We observe that the functions $\ell_1,\dots,\ell_m$ give rise to an injective function 
\[
(\ell_1,\dots,\ell_m) \colon \C L \to \BB R^m
\]
whose image is the set $\{0,1\}^m$ of the vertices of the cube $[0,1]^m$.
This gives a convenient way to define several important supermodular functions.
See \cite[Section 2]{MR3154633} and \cite{MR468177,MR548704} for examples and
properties of supermodular functions.

\subsection*{Solving optimization Problems 1 and 2}

\begin{theorem}\label{T:supermodular vertices min max}
Suppose that $\B b$ is decreasing (see (\ref{Eq:decreasing})).
\begin{enumerate}[label=(\roman*)]
\item 
\label{T:modular:max}
 For any supermodular function $f \colon \C L \to \BB R$
\[
\max \{\B E_x(f) : x \in {\rm M}_1(\B b)\} =\B E_{q^*}(f).
\]
\item
\label{T:modular:min}
Suppose that $\sum_{i=1}^m b_i \leq 1$.
Then for any supermodular function $f \colon \C L \to \BB R$
\[
\min \{\B E_x(f) : x \in {\rm M}_1(\B b)\} =\B E_{q_*}(f).
\]
\end{enumerate}
That is, the maximum expectation over ${\rm M}_1(\B b)$ of any supermodular function is always attained at the upper supermodular vertex measure and the minimum is attained at the lower supermodular vertex measure, if the latter is defined.
\end{theorem}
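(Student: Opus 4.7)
\emph{Strategy.} Both optimization problems are linear programs over the polytope ${\rm M}_1(\B b)$. The plan is to verify the claimed optima via LP duality: for each part, I would exhibit explicit dual multipliers $\lambda_0, \lambda_1, \ldots, \lambda_m$ such that the associated affine function $S \mapsto \lambda_0 + \sum_i \lambda_i \ell_i(S)$ dominates $f$ pointwise on $\C L$ (for the max) or is dominated by $f$ (for the min), while the dual objective $\lambda_0 + \sum_i b_i \lambda_i$ coincides with the claimed extremum. Averaging the pointwise bound against any $x \in {\rm M}_1(\B b)$ and using $\B E_x(\ell_i) = b_i$ and $\sum_S x(S) = 1$ then yields the required inequality $\B E_x(f) \leq \B E_{q^*}(f)$ (resp.\ $\B E_x(f) \geq \B E_{q_*}(f)$).

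\smallskip

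\emph{Part (i).} I propose the dual multipliers $\lambda_0 = f(\mu_0)$ and $\lambda_i = f(\mu_i) - f(\mu_{i-1})$ attached to the canonical chain $\mu_0 \subsetneq \mu_1 \subsetneq \cdots \subsetneq \mu_m$. A short telescoping with $q^{(i)} = b_i - b_{i+1}$ (conventions $b_0 = 1$, $b_{m+1} = 0$) gives $\lambda_0 + \sum_i b_i \lambda_i = \sum_{i=0}^m q^{(i)} f(\mu_i) = \B E_{q^*}(f)$, matching the dual objective. The crucial feasibility step is to show that for every $S = \{i_1 < \cdots < i_k\} \in \C L$,
\[
f(S) - f(\emptyset) \;\le\; \sum_{j=1}^k \bigl(f(\mu_{i_j}) - f(\mu_{i_j-1})\bigr).
\]
I would prove this by introducing the partial chain $T_j = \{i_1, \ldots, i_{j-1}\}$, noting that $T_j \subseteq \mu_{i_j-1}$ (all its elements are strictly less than $i_j$), and then invoking the \emph{increasing differences} form of supermodularity to get $f(T_{j+1}) - f(T_j) \leq f(\mu_{i_j}) - f(\mu_{i_j-1})$. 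Summing telescopes the left-hand side to $f(S) - f(\emptyset)$.

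\smallskip

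\emph{Part (ii).} I propose $\lambda_0 = f(\emptyset)$ and $\lambda_i = f(\{i\}) - f(\emptyset)$. Using $q_*(\emptyset) = 1 - \sum_i b_i$ and $q_*(\{i\}) = b_i$ gives immediately $\lambda_0 + \sum_i b_i \lambda_i = \B E_{q_*}(f)$, and the hypothesis $\sum_i b_i \leq 1$ is precisely what guarantees $q_*(\emptyset) \geq 0$, so that $q_* \in {\rm M}_1(\B b)$. Dual feasibility now demands the reverse pointwise inequality $f(\emptyset) + \sum_{i \in S}(f(\{i\}) - f(\emptyset)) \leq f(S)$ for every $S \in \C L$. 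I would prove this by induction on $|S|$, the step being: for $S = T \sqcup \{i\}$ with $i \notin T$, supermodularity applied to the disjoint pair $T, \{i\}$ yields $f(T) + f(\{i\}) \leq f(T \cup \{i\}) + f(\emptyset)$; combining with the inductive bound on $T$ closes the argument.

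\smallskip

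\emph{Main obstacle and alternative route.} The technical heart is the chain-telescoping feasibility inequality in Part (i): the trick lies in choosing the correct comparison chain $T_j \subseteq \mu_{i_j - 1}$ and applying supermodularity in its increasing-differences form. The decreasing hypothesis on $\B b$ does not enter this verification directly but is what makes $q^{(i)} \geq 0$ and hence places $q^*$ inside ${\rm M}_1(\B b)$. An alternative route, more aligned with the combinatorial optimization references \cite{MR717403,MR3154633,MR468177,MR548704} cited in the introduction, is an explicit exchange argument: given any optimizer $x^*$ with incomparable $S, T$ in its support, the perturbation $x^* + \epsilon\bigl(e_{S \cup T} + e_{S \cap T} - e_S - e_T\bigr)$ stays in ${\rm M}_1(\B b)$ by modularity of each $\ell_i$ (namely $\ell_i(S) + \ell_i(T) = \ell_i(S \cup T) + \ell_i(S \cap T)$) and does not decrease $\B E_{x^*}(f)$ by supermodularity; iterating drives the support onto a chain, and the decreasing hypothesis on $\B b$ then identifies this chain with $\{\mu_0, \ldots, \mu_m\}$, forcing $x^* = q^*$.
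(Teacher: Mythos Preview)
Your proposal is correct and follows essentially the same LP-duality strategy as the paper: the same affine dominating function $g=\sum_i \lambda_i\ell_i$ with the identical multipliers $\lambda_i=f(\mu_i)-f(\mu_{i-1})$ for part~(i) and $\lambda_i=f(\{i\})-f(\emptyset)$ for part~(ii), followed by averaging the pointwise bound against any $x\in{\rm M}_1(\B b)$. The only difference is stylistic: where you verify the feasibility inequality $g\geq f$ (resp.\ $g\leq f$) by a direct telescoping via increasing differences (resp.\ induction on $|S|$), the paper establishes the same inequality by an extremal contradiction argument, choosing a counterexample $S$ with maximal prefix $\mu_k\subseteq S$ for~(i) and minimal cardinality for~(ii).
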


\begin{proof}
\ref{T:modular:max}
Define for $i=0,\dots,m$: $\alpha_0 = f(\mu_0)$ and $\alpha_i=f(\mu_i)-f(\mu_{i-1})$.

Define a function $g \colon \C L \to \BB R$ by
\[
g=\sum_{i=0}^m \alpha_i \cdot \ell_i.
\]
Observe that for any $i=0,\dots,m$
\[
g(\mu_i) = \sum_{j=0}^m \alpha_j \cdot \ell_j(\mu_i) = \sum_{j=0}^i \alpha_j = f(\mu_i).
\]
Set $h=g-f$.
We claim that $h \geq 0$, namely $h(S) \geq 0$ for all $S \in \C L$.
Suppose this is false.
Among all $S \subseteq \{1,\ldots,m\}$ for which $h(S)<0$, choose one with the largest possible $k$ such that $\mu_k=\{1,\dots,k\} \subseteq S$.
Clearly $k<m$ since we know that $h(\mu_m)=0$.
Also, $k+1 \notin S$.
Set $T=S\cup\{k+1\}$.
Then $\mu_{k+1} \subseteq T$ and $S\setminus \mu_k = T \setminus \mu_{k+1}$.
Therefore
\begin{eqnarray*}
&& h(S) = g(S)-f(S) = \alpha_0+\sum_{i=1}^k \alpha_i + \sum_{i=k+2}^m \alpha_i \ell_i(S) - f(S) \\
&& h(T) = g(T)-f(T) = \alpha_0+\sum_{i=1}^{k+1} \alpha_i + \sum_{i=k+2}^m \alpha_i \ell_i(S) - f(T).
\end{eqnarray*}
Observe that $T=S \cup \mu_{k+1}$ and that $S \cap \mu_{k+1} = \mu_k$.
Since $f$ is supermodular,
\[
h(S)-h(T) = f(T)-f(S)-\alpha_{k+1} = f(S \cup \mu_{k+1})-f(S) - f(\mu_{k+1}) + f(\mu_k) \geq 0.
\]
In particular $h(T) \leq h(S)<0$ which is a contradiction to the maximality of $k$.

We will complete the proof by showing that $\B E_{q^*}(f) \geq \B E_{x}(f)$ for any $x \in {\rm M}_1(\B b)$.
First, by definition of ${\rm M}_1(\B b)$ and the linearity of the expectation for any $x \in {\rm M}_1(\B b)$ we have
\[
\B E_x(g) = \alpha_0+\sum_{i=1}^m \alpha_i b_i.
\]
Next, we compute
\begin{align*}
	\B E_{q^*}(f) &= 
	\sum_{S \subseteq \{1,\dots,m\}} f(S) q^*(S)
	=
	\sum_{i=0}^m f(\mu_i) q^{(i)}
	=\sum_{i=0}^m f(\mu_i) (b_i-b_{i+1}) + f(\mu_m) b_m \\
	&=b_0 f(\mu_0) + \sum_{i=1}^m (f(\mu_i)-f(\mu_{i-1}) b_i
	=\alpha_0 + \sum_{i=1}^m \alpha_i b_i =
	\B E_{q^*}(g).
\end{align*}
Now, since $h \geq 0$, for any $x \in {\rm M}_1(\B b)$
\[
\B E_x(f) = \B E_x(g-h) \leq \B E_x(g) = \alpha_0 \B E_x(\B 1) + \sum_{i=1}^m \alpha_i \B E_x(\ell_i) = \alpha_0 + \sum_{i=1}^m \alpha_i b_i = \B E_{q^*}(g),
\]
and the proof is complete.

\ref{T:modular:min}
Define for every $i=0,\dots,m$: $\alpha_0=f(\nu_0)$ and $\alpha_i=f(\nu_i)-f(\nu_0)$.
Define $g \colon \C L \to \BB R$ by
\[
g=\alpha_0 \ell_0 + \sum_{i=1}^m \alpha_i \ell_i.
\]
Set $h=f-g$.
We claim that $h \geq 0$.
Assume that this is false.
Choose $S \subseteq \{1,\ldots,m\}$ of the smallest possible cardinality such that $h(S)<0$.
Clearly $S \neq \emptyset$ since $g(\nu_0) = \alpha_0=f(\nu_0)$, so $h(\nu_0)=0$ (and $\nu_0=\emptyset$).
Choose some $k \in S$ and set $T=S \setminus \{k\}$.
Then 
\begin{eqnarray*}
&& h(S)=f(S)-g(S) = f(S) - \alpha_0-\sum_{i\in S} \alpha_i =
f(S)-\alpha_0-\alpha_k - \sum_{i \in T} \alpha_i \\
&& h(T) = f(T) -g(T) = f(T) - \alpha_0-\sum_{i \in T} \alpha_i.
\end{eqnarray*}
Since $f$ is supermodular
\[
h(S)-h(T) = f(S)-f(T)-\alpha_k =
f(T \cup \{k\}) -f(T) - f(\{k\})-f(\emptyset) \geq 0.
\]
It follows that $h(T) \leq h(S)<0$, contradiction to the minimality of $|S|$.

It remains to show that $\B E_x(f) \geq \B E_{q_*}(f)$ for any $x \in {\rm M}_1(\B b)$.
By definition, for any $x \in {\rm M}_1(\B b)$ we have $\B E_x(g)=\alpha_0+\sum_{i=1}^m \alpha_i b_i$.
Under the hypothesis $\sum_{i=1}^m b_i \leq 1$ we have $q_* \in {\rm M}_1(\B b)$, so
\begin{multline*}
\B E_{q_*}(f) = \sum_{S \subseteq \{1,\ldots,m\}} f(S) \cdot q_*(S) = 
\sum_{i=0}^m f(\nu_i) \cdot q_{(i)} = (1-\sum_{i=1}^m b_i) f(\nu_0) + \sum_{i=1}^m f(\nu_i) b_i 
\\
=f(\nu_0) + \sum_{i=1}^m (f(\nu_i)-f(\nu_0))b_i = 
\alpha_0 + \sum_{i=1}^m \alpha_ib_i = \B E_{q_*}(g).
\end{multline*}
Next, consider any $x \in {\rm M}_1(\B b)$.
Since $h\geq 0$, we clearly have 
$
\B E_x(f) = \B E_x(g+h) \geq \B E_x(g) = \B E_{q^*}(g).
$
This completes the proof.
\end{proof}

\begin{remark}
Part \ref{T:modular:max} of Theorem \ref{T:supermodular vertices min max} is
tightly related to Lov\'asz extensions \cite{MR717403}.  Indeed,
$E_{q^*}(f)=(-f)^L(\B b)$ where on the right hand side is the Lov\'asz
extension of $-f$.

We identify $m$ with the set $\{1,\dots,m\}$ and $\C L=\{0,1\}^m$ with its
power set.  Consier a function $f \colon \{0,1\}^m \to \BB R$.
The {\em convex closure} of $f$ is the function $f^-\colon [0,1] \to \BB R$
defined by 
\[
f^-(\B x) = \min\left\{ \sum_{S \subseteq m} \alpha_S f(S) : \sum_{S\subseteq m} \alpha_S 
\cdot \mathbf{1}_S = \B x, \sum_{S \subseteq m} \alpha_S=1, \alpha_S \geq 0 \right\}.
\]
Thus, by definition, given $\B b \in [0,1]^m$ the value of $f^{-}(\B b)$ is the
minimum of $E_\alpha(f)$ over all probability measure s $\alpha$ on $\C L$ for
which $E_\alpha(\ell_i)=x_i$ where $\ell_i \colon \C L \to \{0,1\}$ are
projections to the $i$-th factor, namely $\alpha \in P(\B b)$.

The {\em Lov\'asz extension} of $f$  is the function $f^L \colon [0,1]^m \to \BB R$ defined as follows.
Given $x \in [0,1]^m$ write $m=\{k_1,\dots,k_m\}$ where $x_{k_1} \geq x_{k_2} \geq \cdots \geq x_{k_m}$.
For any $0 \leq i \leq m$ set $S_i=\{k_1,\dots,k_i\} \subseteq m$.
Then there are unique $\lambda_0,\dots,\lambda_m \geq 0$ such that $\sum_i \lambda_i =1$ and 
$x=\sum_{i=0}^m \lambda_i \cdot \mathbf{1}_{S_i}$.
We define
\[
f^L(x) = \sum_{i=0}^m \lambda_i f(S_i).
\]
In the notation of this paper, if $x=\B b$ then $\mathbf{1}_{S_i}=\mu_i$ and $\lambda_i= q^{(i)}$.
Thus, $f^L(b)=E_{q^*}(f)$.
It is well known that $f^-=f^L$ if and only if $f$ is submodular \cite{VondrakNotes}, and notice that $f$ is submodular 
iff $-f$ is supermodular.
Thus, $E_{q^*}(f)$ is the maximum expectation of $f$ with respect to probability measures $\alpha \in P(\B b)$.
\end{remark}

\subsection*{Multi-step Bernoulli trials}
Fix some $n \geq 1$ and consider $\C L^n$.
This is the natural sample space for $n$ iterations of $m$ single-step Bernoulli trials, or, equivalently, for $m$ $n$-step Bernoulli trials.

\begin{remark}\label{R:L-n-to-Omega-n}
	The set $\C L^n$ can be put into one-to-one correspondence with the sample space $\Omega_n$ of the $n$-step market model with $m$ assets (see Section \ref{S:preliminaries}). Recall that $\Omega_n$ consists of $(m\times n)$-matrices with binary coefficients and the number of elements in $\Omega_n$ is $N=2^{mn}$. 
\end{remark}

For every $1 \leq k\ \leq n$, define the set of random variables $\ell_i^k:\C L^n \to \{0,1\} $, $i=1, \dots ,m$ as follows:
\[
\ell_i^k \ \overset{\text{def}}{=} \ \underbrace{\B 1 \otimes \cdots \otimes \B 1}_{\text{$k-1$ times}} \otimes\ \ell_i \otimes \underbrace{\B 1 \otimes \cdots \otimes \B 1}_{\text{$n-k-1$ times}}
\] 
Notice that $\ell_i^k\left(S_1,\dots ,S_n\right)$ represents the result of the $i$-th trial in the $k$-th iteration according to scenario $\left(S_1,\dots ,S_n\right)$: $\ell_i^k\left(S_1,\dots ,S_n\right)=\ell_i(S_k)$.

Further, for every $1 \leq k\ \leq n$, define a vector random variable \newline $L^k=(\ell_1^k,\dots,\ell_m^k):\C L^n \to \{0,1\}^m $. Notice that $L^k\left(S_1,\dots ,S_n\right)$ represents the result of the $k$-th iteration of $m$ Bernoulli trials according to scenario $\left(S_1,\dots ,S_n\right)$.

\subsection*{Multi-step martingale measures}

We fix some $\B b=(b_1,\dots,b_m)$ as above.

Define a subset ${\rm M}_n(\B b)\subset \Delta (\C L^n)$ as follows: 
\begin{equation}\label{E:M_n_b}
	{\rm M}_n(\B b) = \{ p \in \Delta(\C L^n) \ :  \ \B E_p\left(L^k\ |\ \C F_{k-1}\right)=\B b, \text{ for all $k=1,\dots,n$}\},
\end{equation}
where $\C F_{k-1}$ is a $\sigma$-algebra generated by the random vectors $L^1, L^2,\ldots,L^{k-1}$. 
We will denote the above conditional expectation by 
$\B E_{p}\left( L^k\ |\ L^1,\ldots,L^{k-1} \right)$.

Notice that ${\rm M}_n(\B b)$ is a polyhedral set in $\BB R^N$ given by the following equations: 
\begin{align}
	\label{E:def RN}
\sum_{\tau_k,\dots,\tau_n \in \C L} &\ell_i(\tau_k)\cdot x(\lambda_1,\dots,\lambda_{k-1},\tau_k,\dots,\tau_n)\\ 
&= b_i \cdot \sum_{\tau_k,\dots,\tau_n \in \C L} x(\lambda_1,\dots,\lambda_{k-1},\tau_k,\dots,\tau_n) 
\nonumber
\end{align}
where $x \in \Delta(\C L^n)$ and 
$1 \leq k \leq n, \ \lambda_1,\dots,\lambda_{k-1} \in \C L, \ i=1,\dots,m)$.
To see this divide both sides of (\ref{E:def RN}) by the sum in the right hand side and observe that the
resulting left hand side is exactly the required conditional expectation.
Notice that the product probability measure $q \otimes \cdots \otimes q$ belongs to ${\rm M}_n(\B b)$ 
for $q\in {\rm M}_1(\B b)$.

\begin{remark}\label{R:M_n_b_isM_n}
If \eqref{E:b_i} holds for $i=1, \dots ,m$, the set ${\rm M}_n(\B b)$ coincides with the set ${\rm M}_n$ of martingale measures on $\Omega_n$ (see \eqref{Eq:martingale_N} and \eqref{Eq:martingale_N_jump} ).
\end{remark}

\subsection*{Optimization problems 3 and 4}

Given a random variable $f \colon \C L^n \to \BB R$, we will consider the following optimization problems:

{\bf Problem 3.} {\it Find $p^* \in {\rm M}_n(\B b)$ such that}  
\[
\B E_{p^*}(f)=\max \{\B E_p(f) : p \in {\rm M}_n(\B b)\}.
\]

{\bf Problem 4.} {\it Find $p_* \in {\rm M}_n(\B b)$ such that}  
\[
\B E_{p_*}(f)=\min \{\B E_p(f) : p \in {\rm M}_n(\B b)\}.
\]

\begin{remark}\label{R:opt_prob3_and_4}
	Let (\ref{E:b_i}) hold. Then the above optimization Problem 3 is equivalent to the problem of finding the upper limit of the no-arbitrage contingent claim price interval at time zero and the maximal martingale measure in the $n$-step market model with $m$ assets (see Section \ref{SS:bounds}). Similarly, Problem 4 is equivalent to the problem of finding the lower limit of the no-arbitrage contingent claim price interval at time zero and the minimal martingale measure. 
\end{remark}

\subsection*{Fibrewise supermodular functions}\label{SS:fibrewise_supermodular}

\begin{definition}\label{D:f-supermodular}
	We say that $f \colon \C L^n \to \BB R$ is {\em fibrewise supermodular} if its restriction to the subset $(\lambda_1,\dots,\lambda_{k-1}) \times \C L \times (\lambda_{k+1},\dots,\lambda_n)$ is a supermodular function on $\C L$ for every $\lambda_1,\dots,\widehat{\lambda_k},\dots,\lambda_n \in \C L$.
\end{definition}

\begin{example}\label{Ex:fibrewise supermodular}
	Suppose that $u_{i,j} \colon \C L \to \BB R$ are affine functions, $1 \leq i \leq n$ and $1 \leq j \leq r$.
	Suppose that each $u_{i,j}$ has the form $\sum_k \alpha_k x_k+\gamma$ where $\alpha_k \geq 0$.
	Suppose that $h \colon \BB R \to \BB R$ is convex.
	Let  $g \colon \C L^n \to \BB R$ be the function $\sum_{j=1}^r u_{1,j} \otimes \cdots \otimes u_{n,j} +c \B 1$.
	Then $f=h \circ g|_{\C L^n}$ is fibrewise supermodular.
	
	Indeed, the restriction of $g$ to any subset $(\lambda_1,\dots,\lambda_{k-1}) \times \C L \times (\lambda_{k+1},\dots,\lambda_n)$ is a linear combination of the affine maps $u_{k,1}, \dots, u_{k,r}$ with non-negative coefficients and a multiple of~$\B 1$.
	Now appeal to \cite[Proposition 2.2.6 (a)]{MR3154633}.
\end{example}

\subsection*{Solving optimization Problems 3 and 4}

\begin{theorem}\label{T:min submodular n step}
	Suppose that $\B b$ is decreasing (see (\ref{Eq:decreasing})).
	\begin{enumerate}[label=(\roman*)]
		\item 
		\label{T:fibrewise_modular:max}
		For any fibrewise supermodular function $f \colon \C L^n \to \BB R$
		\[
	\max \{\B E_{p}(f) : p \in {\rm M}_n(\B b)\} =\B E_{p^*}(f) =\B E_{{q^*}^{\otimes n}}(f),
	\]
	where $q^*$ is the upper supermodular vertex measure defined in \eqref{Eq:supervertex}.
			\item
		\label{T:fibrewise_modular:min}
		Suppose $\sum_{i=1}^m b_i \leq 1$.  
		Then for any fibrewise supermodular function \newline $f \colon \C L^n \to \BB R$
	\[
	\min \{ \B E_p(f) : p \in {\rm M}_n(\B b)\} =\B E_{p_*}(f) = \B E_{{q_*}{\otimes n}}(f),
	\]
	where $q_*$ is the lower supermodular vertex measure defined in \eqref{Eq:subvertex}.
	\end{enumerate}
		Thus, $\B E_{p}(f)$ is maximized at the product measure $p^*={q^*}^{\otimes n}\in {\rm M}_n(\B b)$, where $q^*$ is the upper supermodular vertex measure (see \eqref{Eq:supervertex}).
		
		Further, $\B E_{p}(f)$ is minimized at the product measure $p_*={q_*}^{\otimes n}\in {\rm M}_n(\B b)$, where $q_*$ is the lower supermodular vertex measure (see \eqref{Eq:subvertex}), provided the latter is defined.
			
\end{theorem}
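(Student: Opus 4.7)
The plan is to prove Theorem \ref{T:min submodular n step} by induction on $n$, using the single-step result Theorem \ref{T:supermodular vertices min max} both as the base case and as the mechanism that closes the inductive step. The idea is standard: decompose any measure in ${\rm M}_n(\B b)$ via the tower property into a marginal on the first coordinate together with conditional distributions on the remaining coordinates, push the optimization inwards to the conditional measures (to which induction applies), and then optimize over the marginal by invoking the single-step theorem after checking that the ``inner maximum'' is still a supermodular function on $\C L$.

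More concretely, for the upper bound I would proceed as follows. Fix $p \in {\rm M}_n(\B b)$ and let $p_1 \in \Delta(\C L)$ denote the marginal distribution of $L^1$; for each $\lambda_1 \in \C L$ with $p_1(\lambda_1)>0$ let $\tilde p_{\lambda_1} \in \Delta(\C L^{n-1})$ be the conditional distribution of $(L^2,\dots,L^n)$ given $L^1=\lambda_1$. The defining condition \eqref{E:M_n_b} of ${\rm M}_n(\B b)$ immediately gives $p_1 \in {\rm M}_1(\B b)$ and, for each such $\lambda_1$, $\tilde p_{\lambda_1}\in {\rm M}_{n-1}(\B b)$. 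Writing $f_{\lambda_1}(\lambda_2,\dots,\lambda_n):=f(\lambda_1,\dots,\lambda_n)$, fibrewise supermodularity of $f$ on $\C L^n$ restricts to fibrewise supermodularity of $f_{\lambda_1}$ on $\C L^{n-1}$. By the inductive hypothesis applied to each $\tilde p_{\lambda_1}$,
\[
\B E_{\tilde p_{\lambda_1}}(f_{\lambda_1}) \;\leq\; \B E_{(q^*)^{\otimes (n-1)}}(f_{\lambda_1}) \;=:\; g(\lambda_1).
\]
Summing against $p_1$ and using the tower property yields $\B E_p(f) \leq \B E_{p_1}(g)$.

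The crux is now to verify that $g\colon \C L\to \BB R$ is itself supermodular, so that the base case of Theorem \ref{T:supermodular vertices min max}\ref{T:modular:max} applies to give $\B E_{p_1}(g)\leq \B E_{q^*}(g)=\B E_{(q^*)^{\otimes n}}(f)$. This is the only nontrivial step: it follows from the observation that
\[
g(\lambda_1)\;=\;\sum_{\sigma_2,\dots,\sigma_n \in \C L} q^*(\sigma_2)\cdots q^*(\sigma_n)\, f(\lambda_1,\sigma_2,\dots,\sigma_n),
\]
so $g$ is a non-negative linear combination of the functions $\lambda_1 \mapsto f(\lambda_1,\sigma_2,\dots,\sigma_n)$, each of which is supermodular by the fibrewise supermodularity hypothesis; supermodularity is preserved under non-negative linear combinations. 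Equality throughout is attained by $p=(q^*)^{\otimes n}$, which lies in ${\rm M}_n(\B b)$ because $q^* \in {\rm M}_1(\B b)$, so the maximum is exactly $\B E_{(q^*)^{\otimes n}}(f)$.

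For part \ref{T:fibrewise_modular:min} the argument is the mirror image: the same reduction via $p_1$ and the conditional measures $\tilde p_{\lambda_1}$ applies, and since the hypothesis $\sum_i b_i \leq 1$ is on $\B b$ alone (and $\B b$ is the same at every stage of the induction), both $q_*\in {\rm M}_1(\B b)$ and the inductive invocation of Theorem \ref{T:supermodular vertices min max}\ref{T:modular:min} remain legitimate; the resulting $\lambda_1 \mapsto \B E_{(q_*)^{\otimes(n-1)}}(f_{\lambda_1})$ is again a non-negative linear combination of supermodular functions, hence supermodular, and the base case minimizes it over $p_1$ at $q_*$. I expect the main subtlety to be organizing the conditional-expectation identities cleanly when $p_1(\lambda_1)=0$ for some $\lambda_1$ (the corresponding conditional measure is then undefined but the term drops out of the sum, so this is a notational issue rather than a real obstacle); the one genuinely content-bearing step is the preservation of supermodularity when integrating out the tail coordinates against the product measure, and that is what the argument above establishes.
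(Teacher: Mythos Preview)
Your proposal is correct. Both your argument and the paper's rest on the same underlying idea---successively replacing the conditional law on each coordinate by $q^*$ (resp.\ $q_*$) and invoking Theorem~\ref{T:supermodular vertices min max} at each stage---but you organize it differently. The paper constructs an explicit interpolating sequence $x=x^{(n)},x^{(n-1)},\dots,x^{(0)}=(q^*)^{\otimes n}$, where $x^{(k)}$ keeps the marginal of $x$ on the first $k$ coordinates and uses $q^*$ independently on the rest, and shows $\B E_{x^{(k+1)}}(f)\le \B E_{x^{(k)}}(f)$ by applying the single-step theorem directly to the raw fibre $\tau\mapsto f(\boldsymbol\lambda,\tau,\boldsymbol\theta)$ with all other coordinates frozen. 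You instead recurse on $n$: first collapse coordinates $2,\dots,n$ in one blow via the inductive hypothesis, then optimize over the first coordinate. The price is that the function you feed into the single-step theorem is no longer a raw fibre but the averaged function $g(\lambda_1)=\B E_{(q^*)^{\otimes(n-1)}}(f_{\lambda_1})$, so you need the extra (easy) observation that supermodularity is stable under non-negative linear combinations; the paper avoids this by only ever freezing coordinates rather than integrating them out. Conversely, your formulation is slightly cleaner bookkeeping-wise (no need to track the whole chain of hybrid measures $x^{(k)}$), and your remark about the $p_1(\lambda_1)=0$ case is exactly right---those terms vanish and cause no trouble.
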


\begin{proof}
	We will show that if $x \in \OP{M}_n(\B b)$ then $\B E_x(f) \leq \B E_{{q^*}^{\otimes n}}(f)$ and $\B E_x(f) \geq \B E_{{q_*}^{\otimes n}}(f)$, thus proving the result since $q^{\otimes n} \in \OP{M}_n(\B b)$ for any $q \in {\rm M}_1(\B b)$.
	
	For every $0 \leq k \leq n$ and every $\lambda_1,\dots,\lambda_k \in \C L$ set
	\[
	y_k(\lambda_1 \dots \lambda_k) = \sum_{\theta_{k+1},\dots,\theta_n\in \C L} x(\lambda_1, \dots ,\lambda_k,\theta_{k+1},\dots,\theta_n).
	\]
	Thus, $y_k(\lambda_1 \dots \lambda_k)$ is the probabilty (with respect to $x$) of the event $\{L^1=\lambda_1,\dots,L^k=\lambda_k\}$.
	Define $x^{(k)} \colon \C L^n \to \BB R$ and $x_{(k)} \colon \C L^n \to \BB R$ by
	\begin{eqnarray*}
		x^{(k)}(\lambda_1,\dots,\lambda_n)&=&y_k(\lambda_1 \dots \lambda_k) \cdot q^*(\lambda_{k+1}) \cdots q^*(\lambda_n) \\
		x_{(k)}(\lambda_1,\dots,\lambda_n)&=&y_k(\lambda_1 \dots \lambda_k) \cdot q_*(\lambda_{k+1}) \cdots q_*(\lambda_n).
	\end{eqnarray*}
	Clearly, $x^{(k)} \geq 0$ and $x_{(k)} \geq 0$.
	Also, since $q^*$ is a probability measure on $\C L$
	\begin{align*}
	\sum_{\lambda_1,\dots,\lambda_n \in \C L} x^{(k)}(\lambda_1,\dots,&\lambda_n) 
	= \sum_{\lambda_1,\dots,\lambda_k} y_k(\lambda_1,\dots,\lambda_k) 
	\sum_{\lambda_{k+1},\dots,\lambda_n} \prod_{i=k+1}^n q^*(\lambda_i)\\ 
	&=\sum_{\lambda_1,\dots,\lambda_k \in \C L} y_k(\lambda_1,\dots,\lambda_k) =
	\sum_{\lambda_1,\dots,\lambda_n \in \C L} x(\lambda_1,\dots,\lambda_n) = 1.
\end{align*}
	So $x^{(k)} \in \Delta(\C L^n)$.
	An identical argument using  $q_* \in \Delta(\C L)$ gives $x_{(k)} \in \Delta(\C L)$.
	
	Clearly $x^{(n)}=x=x_{(n)}$ and $x^{(0)}={q^*}^{\otimes n}$ and $x_{(0)}={q_*}^{\otimes n}$.
	To complete the proof it remains to prove that $\B E_{x^{(k)}}(f) \geq \B E_{x^{(k+1)}}(f)$ for all $0 \leq k <n$, and similarly that $\B E_{x_{(k)}}(f) \leq \B E_{x_{(k+1)}}(f)$.
	
	If $y_k(\lambda_1,\dots,\lambda_k) >0$ then the map $p \colon \C L \to \BB R$
	\[
	p(\lambda) =\frac{y_{k+1}(\lambda_1,\dots,\lambda_k,\lambda)}{y_{k}(\lambda_1,\dots,\lambda_k)}
	\]
	is clearly a probability measure on $\C L$.
	Direct computation shows that for any $\lambda_1,\dots,\lambda_k \in \C L$
	\begin{eqnarray*}
		&& \sum_{\tau_{k+1},\dots,\tau_n \in \C L} \ell(\tau_{k+1})  x^{(k+1)}(\boldsymbol \lambda,\boldsymbol \tau) = \sum_\tau y_{k+1}(\boldsymbol \lambda \tau)\ell_i(\tau) \qquad \text{and} \\
		&& \sum_{\tau_{k+1},\dots,\tau_n \in \C L}  x^{(k+1)}(\boldsymbol \lambda,\boldsymbol \tau) = \sum_\tau y_{k}(\boldsymbol \lambda).
	\end{eqnarray*}
	The defining equations \eqref{E:def RN} of $\OP{M}_n(\B b)$ imply that 
	\[
	\sum_\tau y_{k+1}(\lambda_1,\dots,\lambda_k,\tau) \ell_i(\tau) = b_i \cdot y_k(\lambda_1,\dots,\lambda_k).
	\]
	Therefore, if $y_k(\lambda_1,\dots,\lambda_k)>0$ then $p \in {\rm M}_1(\B b)$.
	Hence, if $g \colon \C L \to \BB R$ is supermodular (resp. submodular) then $\B E_{q^*}(g) \geq \B E_p(g)$ (resp. $\B E_{q_*}(g) \leq \B E_p(g)$) which can be written explicitly
	\begin{eqnarray}
		\label{E:2}
		&& \sum_{\tau \in \C L} g(\tau) \cdot y_{k+1}(\lambda_1,\dots,\lambda_k,\tau) \leq y_{k}(\lambda_1,\dots,\lambda_k) \cdot \sum_{\tau \in \C L} g(\tau) \cdot q^*(\tau) \\
		\nonumber
		&& \sum_{\tau \in \C L} g(\tau) \cdot y_{k+1}(\lambda_1,\dots,\lambda_k,\tau) \geq y_{k}(\lambda_1,\dots,\lambda_k) \cdot \sum_{\tau \in \C L} g(\tau) \cdot q_*(\tau).
	\end{eqnarray}
	For $\theta_1,\dots,\theta_j \in \C L$ denote $q^*(\boldsymbol \theta)=q^*(\theta_1) \cdots q^*(\theta_j)$ and similarly for $q_*(\boldsymbol \theta)$.
	Now, 
	\begin{eqnarray*}
		&& \B E_{x^{(k)}}(f) = \sum_{\lambda_1,\dots,\lambda_k} 
		\sum_{\theta_{k+1},\dots,\theta_n} f(\boldsymbol \lambda \boldsymbol \theta) \cdot y_k(\boldsymbol \lambda) \cdot q^*(\boldsymbol \theta) \qquad \text{and} \\
		&& \B E_{x^{(k+1)}}(f) = 
		\sum_{\lambda_1,\dots,\lambda_{k+1}} \sum_{\theta_{k+2},\dots,\theta_n} f(\boldsymbol \lambda \boldsymbol \theta) \cdot y_{k+1}(\boldsymbol \lambda)\cdot q^*(\boldsymbol \theta) 
	\end{eqnarray*}
	Similar formulas hold for $\B E_{x_{(k)}}(f)$ and $\B E_{x_{(k+1)}}(f)$ by replacing $q^*$ with $q_*$ on the right hand sides.
	Since $f$ is fibrewise supermodular and $q^*(\boldsymbol \theta) \geq 0$ we can use the inequalities in \eqref{E:2} to continue the second equality:
	\begin{multline*}
		= \sum_{\lambda_1,\dots,\lambda_{k}} \sum_{\theta_{k+2},\dots,\theta_n} \sum_{\tau \in \C L} f(\boldsymbol \lambda \tau \boldsymbol \theta) \cdot y_{k+1}(\boldsymbol \lambda \tau ) \cdot q^*(\boldsymbol \theta)
		\\
		\leq 
		\sum_{\lambda_1,\dots,\lambda_{k}} \sum_{\theta_{k+2},\dots,\theta_n} \sum_{\tau \in \C L} f(\boldsymbol \lambda \tau \boldsymbol \theta) \cdot y_{k}(\boldsymbol \lambda) \cdot q^*(\tau) \cdot q(\boldsymbol \theta) 
		\\
		=
		\sum_{\lambda_1,\dots,\lambda_{k}} \sum_{\theta_{k+1},\dots,\theta_n} f(\boldsymbol \lambda \boldsymbol \theta) \cdot y_{k}(\boldsymbol \lambda) \cdot q^*(\boldsymbol \theta) = 
		\B E_{x^{(k)}}(f).
	\end{multline*}
	A similar calculation shows that $\B E_{x_{(k+1)}}(f) \geq \B E_{x_{(k)}}(f)$.
	This completes the proof.
\end{proof}

\par\bigskip\noindent
{\bf Acknowledgement.}
We thank both Bentley University and The University of Aberdeen for supporting our collaboration.


	
	
	
	


\bibliographystyle{amsplain}

\end{document}